\newtheorem{remark}{Remark}
\newtheorem{definition}{Definition}
\newtheorem{theorem}{Theorem}
\newtheorem{proposition}{Proposition}
\newtheorem{lemma}{Lemma}
\newtheorem{example}{Example}
\newtheorem{corollary}{Corollary}
\newtheorem{assumption}{Assumption}
\newcommand{\oomit}[1]{}
\begin{document}

\title{Robust Invariant Sets Computation for Switched Discrete-Time Polynomial Systems}

\author{Bai Xue and Naijun Zhan}

\maketitle

\begin{abstract}
In this paper we study the robust invariant sets generation problem for discrete-time switched polynomial systems subject to disturbance inputs within the optimal control framework. A robust invariant set of interest  is a set of states such that every possible trajectory starting from it never leaves a specified safe set, regardless of actual disturbances. The maximal robust invariant set is shown to be the zero level set of the unique bounded solution to a Bellman type equation, which is a functional equation being widely used in discrete-time optimal control. This is the main contribution of this work. The uniqueness of bounded solutions enables us to solve the derived Bellman type equation using numerical methods such as the value iteration, which provides an approximation of the maximal robust invariant set. In order to increase the scalability of the Bellman equation based method, a semi-definite program, which is constructed based on the derived Bellman type equation, is also implemented to synthesize robust invariant sets. Finally, three examples demonstrate the performance of our methods.
\end{abstract}
\begin{IEEEkeywords}
Robust Invariant Sets; Discrete-time Switched Systems; Bellman Equations; Semi-definite Programming.
\end{IEEEkeywords}

\section{Introduction}
\label{Int}
The computation of robust invariant sets is central to the validation of systems such as nonlinear dynamical systems and hybrid-state extensions thereof \cite{khalil2002nonlinear,lunze2009handbook}. A robust invariant set of interest in this paper refers to a set of states such that every possible trajectory initialized in it never violates a specified safe state constraint irrespective of the actual disturbance. It has many other names in the literature, e.g., reachability tubes \cite{rakovic2008} and invariance kernels \cite{aubin2011}. Due to its widespread applications in systems and control for stability analysis and control design \cite{blanchini2008set}, robust invariant sets generation has been the subject of extensive research over past several decades, e.g.,
\cite{bertsekas1972,rakovic2004,grieder2005,maidens2013,trodden2016,li2018invariance,aubin2011}.

One popular line for studying robust invariant sets generation problem is by exploiting the link to optimal control problems. When the system is continuous-time, Hamilton-Jacobi equations, which are widely used in optimal control theory \cite{Bardi1997}, are explored for performing reachability analysis, e.g., \cite{aubin2011,margellos2011,mitchell2005}. 
From a theoretical point of view, one advantage of this method is that the exact reach sets of interest could be characterized by level sets of viscosity solutions to Hamilton-Jacobi equations. From a computational point of view, there exists well-developed numerical methods \cite{falcone2016,mitchell2007,rakovic2008,aguilar2014} for solving Hamilton-Jacobi equations with appropriate number of state variables, rendering possible the gain of exact reach sets. Recently, \cite{WZF19} proposed a Hamilton-Jacobi equation and characterized the maximal robust invariant set as the zero level set of the unique Lipschitz continuous viscosity solution to this Hamilton-Jacobi equation for state-constrained perturbed continuous-time systems.

Despite significant progress towards the computation of robust invariant sets for continuous-time systems within the optimal control framework, works on the computation of robust invariant sets for its counterpart, i.e. discrete-time systems, are relatively rare in this regard, especially for discrete-time switched nonlinear systems subject to disturbance inputs.
A discrete-time switched system is defined by a family of discrete-time subsystems and a switching rule orchestrating the switching between subsystems \cite{lunze2009handbook}.  Nonlinear discrete-time systems are widespread in many practical systems such as biological systems and economic systems, where the underlying models are in discrete-time \cite{mahmoud2012discrete}. Moreover, a growing number of digital devices are being used for information processing and control purposes in a variety of complex systems applications, including industrial processes, power networks and communication networks. For these applications, it is reasonable to model the system using a discrete-time nonlinear state space model \cite{rabbath2013discrete}. Thus, the study of robust invariant sets for discrete-time nonlinear systems is conducive to improving these applications, as mentioned before.

In this paper we study the generation of robust invariant sets for  discrete-time switched polynomial systems subject to disturbance inputs in the framework of optimal control. We firstly define a bounded value function with a positive-valued parameter falling between zero and one  such that its zero sub-level set is equal to the maximal robust invariant set. Then the value function is reduced to a bounded solution to a Bellman type equation. The Bellman type equation, named after Richard Bellman, is a functional equation being widely used in discrete-time optimal control \cite{Bardi1997}. When the value of the positive-valued parameter is strictly less than one, the bounded solution to the Bellman type equation is unique. The value iteration is shown to be capable of solving the  equation with an appropriate number of state variables, thereby enabling the gain of an estimation of the maximal robust invariant set. When the parameter is equal to one, the resulting Bellman type equation does not feature this uniqueness property. For this case, the Bellman type equation is relaxed into a system of inequalities such that a semi-definite program is constructed for synthesizing robust invariant sets. Finally, three examples demonstrate the performance of our approaches.  

The main contribution of this paper is summarized as follows: The maximal robust invariant set of  a discrete-time switched polynomial system subject to disturbance inputs is characterized as the zero level set of the unique bounded solution to a Bellman type equation. The well-known value iteration is shown to be capable of resolving this equation for obtaining an estimation of the maximal robust invariant set. To the best of our knowledge this is the first work on linking robust invariant sets with Bellman type equations exhibiting unique bounded solutions. 

\subsection*{Related Work}
Most of existing works on the computation of robust invariant sets for discrete-time systems focus on linear systems, e.g. \cite{kolmanovsky1998,rakovic2004,grieder2005,rakovic2005,blanchini2008set,athanasopoulos2010,tahir2012,trodden2016}. 

In the nonlinear case a common practice is to exploit the connection between invariance and Lyapunov functions, and define invariant sets as sub-level sets of Lyapunov functions, e.g.,  \cite{coutinho2013local,giesl2014computation,luk2015}.  Unfortunately, it is  known that the construction of Lyapunov functions for nonlinear systems is a very difficult problem in general \cite{giesl2015review}, although powerful sum-of-squares techniques are proposed and widely used in the past decade to compute Lyapunov functions. When it returns to the computation of  invariant sets, the sum-of-squares technique often leads to nonconvex optimization problems \cite{she2013computing}. Existing Lyapunov-based methods cannot guarantee the calculation of the maximal robust invariant set generally. In our method, the Bellman type equation from discrete-time optimal control is the main mathematical tool. The maximal robust invariant set is characterized as the zero level set of its unique bounded solution, which can be approximated uniformly by the value iteration. 

Another method for invariant sets synthesis is fixed-point methods, e.g., \cite{bertsekas1972}. Numerical implementation of fixed point methods is nontrivial even for linear systems because invariant sets are not guaranteed to be finitely determined \cite{rungger2017},
i.e., computation may not terminate in a finite number of steps. Moreover, in fixed point methods ellipsoidal and polyhedral sets received special attention in the literature since both ellipsoidal and polyhedral sets are convex and hence technically more appealing. There is, however, one major drawback in computing invariant sets based on ellipsoidal or polyhedral sets. The shape of such sets has to be preserved during the set computation. While this is possible for linear systems, it is general not the case for nonlinear systems. In order to overcome these shortcomings, based on branch-and-bound scheme interval analysis approaches were proposed to synthesize robust invariant sets in \cite{li2018invariance,li2018}. 

The work mostly close to the present one is \cite{xue2018}, in which the robust non-termination set is equivalent to the robust invariant set in this paper when the computer program in \cite{xue2018} is transformed into a discrete-time switched system. The main difference between this work and \cite{xue2018} is that the derived Bellman type equation in this work has a unique bounded solution, thereby facilitating the use of existing numerical methods such as the value iteration to estimate the maximal robust invariant set. However, the Bellman type equation in \cite{xue2018} does not feature this uniqueness property.  A resulting consequence is that  numerical methods for solving  the Bellman type equation in \cite{xue2018} cannot guarantee to compute an estimation of the maximal robust invariant set. The work \cite{xue2020cha} presents a Bellman type equation for computing the maximal robust region of attraction. The maximal robust region of attraction is a set of all states such that every trajectory initialized in it will approach an equilibrium  while never violating a specified state constraint, regardless of the actual disturbance. It differs from the maximal robust invariant set considered in this paper in that the maximal robust invariant set is not required to contain equilibria and trajectories starting from it are not required to approach an equilibrium. Recently, a Bellman equation based method was proposed for computing finite time horizon maximal invariant sets in \cite{jonesgeneralization}. Unlike \cite{jonesgeneralization}, the present work focuses on generating the maximal robust invariant set over the infinite time horizon rather than finite time horizons.

\textit{Organization of the paper.} This paper is structured as follows. In Section \ref{Pre}, basic notions used throughout this paper and the problem of interest are introduced. Then we elucidate our approach for synthesizing robust invariant sets in Section \ref{IG}. After demonstrating our approach on three examples in Section \ref{ex}, we  conclude this paper in Section \ref{con}.

\section{Preliminaries}
\label{Pre}
In this section we describe discrete-time switched polynomial systems subject to disturbance inputs, and robust invariant sets of interest in this paper.

\subsection{Problem Description}
\label{PD}
Before posing the problem studied, let us introduce some basic notions used in the rest of this paper: $\mathbb{N}$ stands for the set of nonnegative integers and $\mathbb{R}$ for
the set of real numbers; $\mathbb{R}[\cdot]$ denotes the ring of polynomials in variables given by the argument; $\mathbb{R}_i[\cdot]$ denotes the set of polynomials of degree $i$ in variables given by the argument, $i\in\mathbb{N}$; Vectors are denoted by boldface letters.

We consider discrete-time switched polynomial systems subject to disturbance inputs presented in Definition \ref{ps}.
\begin{definition}
\label{ps}
A discrete-time switched polynomial system subject to  disturbance inputs  (abbr., \emph{DSPS})  is a quintuple $(\bm{x}_0,L,\mathcal{X},\mathcal{D},\mathcal{F})$ with the following components:
\begin{enumerate}
\item[-] $\bm{x}_0\in \mathbb{R}^n$ is the initial state;
\item[-] $L=\{1,\ldots,k\}$ is a finite set of locations;
\item[-] $\mathcal{X}=\{X_i,i=1,\ldots,k\}$ with $X_i\cap X_j=\emptyset$ if $i\neq j$ and $\cup_{i=1}^k X_i=\mathbb{R}^n$, where $X_i$ contains all possible states while at location $i$;
\item[-] $\mathcal{D}=\{D_i,i=1,\ldots,k\}$, where $D_i\subseteq \mathbb{R}^{m_i}$ is the set of disturbance inputs while at location $i$;
\item[-] $\mathcal{F}=\{\bm{f}_i(\bm{x},\bm{d}): X_i\times D_i \rightarrow \mathbb{R}^n,i=1,\ldots,k\}$ with $\bm{f}_i(\bm{x},\bm{d})$ constraining the evolution of the state by the difference equation $\tilde{\bm{x}}=\bm{f}_i(\bm{x},\bm{d})$ while at location $i$,
\end{enumerate}
where  $X_i=\{\bm{x}\in \mathbb{R}^n\mid \bigwedge_{j=1}^{n_i}h_{i,j}(\bm{x})\rhd 0\}$ with $\rhd\in \{\leq, <\}$,  and $D_i=\{\bm{d}\in \mathbb{R}^{m_i}\mid \bigwedge_{j=1}^{m'_{i}}p_{i,j}(\bm{d})\leq 0\}$, $i=1,\ldots,k$. Also, $h_{i,j}(\bm{x})\in \mathbb{R}[\bm{x}]$, $i=1,\ldots,k$, $j=1,\ldots,n_i$; $p_{i,j}(\bm{d})\in \mathbb{R}[\bm{d}]$, $i=1,\ldots,k$, $j=1,\ldots,m'_{i}$; $\bm{f}_i(\bm{x},\bm{d})\in \mathbb{R}[\bm{x},\bm{d}]: X_i \times D_i \rightarrow \mathbb{R}^n$, $i=1,\ldots,k$.
\end{definition}

Prior to presenting the concept of the maximal robust invariant set for  DSPS, we firstly define a disturbance policy  controlling its trajectory.
\begin{definition}
A disturbance policy for  \emph{DSPS} is an ordered sequence $(\bm{d}(l))_{l \in \mathbb{N}}$, where $\bm{d}(\cdot): \mathbb{N}\rightarrow D_j$ with $j\in \{1,\ldots,k\}$. 
\end{definition}

\begin{definition}
\label{tra}
Given a disturbance policy $\pi=(\bm{d}(l))_{l\in \mathbb{N}}$ and an initial state $\bm{x}_0\in \mathbb{R}^n$, if there exists a sequence $(\bm{x}(l))_{l\in \mathbb{N}}$ satisfying 
\begin{equation}
\label{path}
\bm{x}(l+1)=\bm{f}(\bm{x}(l),\bm{d}(l)),
\end{equation}
where $\bm{x}(0)=\bm{x}_0$ and
\begin{equation}
\label{path_1}  
  \bm{f}(\bm{x},\bm{d})=1_{X_1}(\bm{x})\cdot\bm{f}_1(\bm{x},\bm{d})+\cdots+1_{X_k}(\bm{x})\cdot\bm{f}_k(\bm{x},\bm{d})
  \end{equation}
 with 
\begin{equation} 
\label{D}
 \bm{f}(\cdot,\cdot):S\times D\rightarrow \mathbb{R}^n, S=X_i \text{~and~} D=D_i \text{~if~} \bm{x}\in X_i,
 \end{equation}
  and $1_{X_i}(\cdot): X_i\rightarrow \{0,1\}$, $i=1,\ldots,k$, representing the indicator function of the set $X_i$,  i.e.
\[1_{X_i}(\bm{x}):=\begin{cases}
   1, \quad \text{if }\bm{x}\in X_i,\\
   0, \quad \text{if }\bm{x}\notin X_i
\end{cases},\]
then $\pi$ is an admissible disturbance policy for the initial state $\bm{x}_0$. The trajectory $\bm{\phi}_{\bm{x}_0}^{\pi}(\cdot):\mathbb{N}\rightarrow \mathbb{R}^n$, induced by $\pi$ and $\bm{x}_0$, to \emph{DSPS} is then defined as follows:
\[\bm{\phi}_{\bm{x}_0}^{\pi}(l):=\bm{x}(l),\forall l\in \mathbb{N}.\]
Furthermore, we define $\Pi_{\bm{x}_0}$ as the set of admissible disturbance policies for the initial state $\bm{x}_0$.
\end{definition}

Now, we define the maximal robust invariant set such that DSPS starting from it never leaves the safe set  $X=\{\bm{x}\in \mathbb{R}^n\mid \bigwedge_{i=1}^{n_{0}}h_{i}(\bm{x})\leq 0\}$, where $h_{i}(\bm{x})\in \mathbb{R}[\bm{x}]$, $i=1,\ldots,n_0$.
\begin{definition}[Maximal Robust Invariant Set]\label{RNS}
The maximal robust invariant set $\mathcal{R}_0$ is the set of all states in the safe set $X$ such that every possible trajectory of \emph{DSPS} starting from it never leaves $X$, i.e.
\begin{equation}
\mathcal{R}_0=\{\bm{x}_0 \in \mathbb{R}^n\mid \bm{\phi}_{\bm{x}_0}^{\pi}(l)\in X, \forall l\in \mathbb{N}, \forall \pi \in \Pi_{\bm{x}_0}\}.
\end{equation}
A robust invariant set is a subset of the maximal robust invariant set $\mathcal{R}_0$.
\end{definition}

It is worth remarking here that a robust invariant set in Definition \ref{RNS} is not an invariant set in the classical sense. Unlike invariant sets in the classical sense in which trajectories get trapped, it is a set of states such that every possible trajectory of DSPS starting from it does not leave the safe set $X$. Thus, trajectories starting from a robust invariant set in Definition \ref{RNS} may leave it but cannot leave $X$.

The focus of this paper is on the computation of robust invariant sets. In the rest of this paper, we assume that the interior of the maximal robust invariant set $\mathcal{R}_0$ exists for DSPS.
\begin{assumption}
$\mathcal{R}_0^{\circ}\neq \emptyset$, where $\mathcal{R}_0^{\circ}$ is the interior of the maximal robust invariant set $\mathcal{R}_0$.
\end{assumption}

\section{Computing Robust Invariant Sets}
\label{IG}
In this section we elucidate our approach of synthesizing robust invariant sets for DSPS.  Subsection \ref{charac} characterizes the maximal robust invariant set $\mathcal{R}_0$ as the zero level set of the unique bounded solution to a Bellman type equation, which can be solved by the value iteration.  This is the main contribution of this section. Subsection \ref{AA} presents a computationally tractable semi-definite programming problem, which is derived from the Bellman type equation, for synthesizing robust invariant sets. 

\subsection{Bellman Equations }
\label{charac}
In this subsection we reduce the maximal robust invariant set $\mathcal{R}_0$ as the zero (sub)level set of a bounded solution $v(\bm{x}): \mathbb{R}^n\rightarrow \mathbb{R}$ to a Bellman type equation of the following form:
\begin{equation}
\label{eq}
\begin{split}
\min\big\{\inf_{\bm{d}\in D}(&v(\bm{x})-\alpha v(\bm{f}(\bm{x},\bm{d}))),v(\bm{x})-\max_{j\in \{1,\ldots,n_0\}} h'_{j}(\bm{x})\big\}=0,
\end{split}
\end{equation}
where the set $D$ is defined in \eqref{D}, $h'_{j}(\bm{x})=\frac{h_{j}(\bm{x})}{1+h_{j}^2(\bm{x})}$ and $\alpha \in (0,1]$ is a user-specified constant. When $\alpha\in (0,1)$, the bounded solution to \eqref{eq} is unique. The proofs of the aforementioned claims will be presented afterwards. The uniqueness of bounded solutions facilitates the use of the value iteration  in reinforcement learning to solve \eqref{eq}. A question arises naturally: \textit{how can the solution to the equation \eqref{eq} be obtained? } 

In order to solve the above mentioned question, we first present a value function constructed based on a scalar value $\alpha\in (0,1]$ and the family of functions $(h_j(\bm{x}))_{j=1}^{n_0}$ defining the safe set $X$.  Then, based on its underlying property of satisfying the dynamic programming principle as shown in Lemma \ref{dp}, this value function finally boils down to the unique bounded solution to \eqref{eq} when $\alpha \in (0,1)$, which is formally formulated in Theorem \ref{equations}.

The value function $V:\mathbb{R}^n\rightarrow \mathbb{R}$ is defined by:
\begin{equation}
\label{vf}
V(\bm{x}):=\sup_{ \pi \in \Pi_{\bm{x}}}\sup_{l\in \mathbb{N}}\max_{j\in \{1,\ldots,n_0\}}\big\{\alpha^l h'_{j}(\bm{\phi}_{\bm{x}}^{\pi}(l))\big\},
\end{equation}
where $\alpha\in (0,1]$ is a user-specified constant scalar value. Obviously, $-1<h'_{j}(\bm{x})<1$ over $\bm{x}\in \mathbb{R}^n$ for $j=1,\ldots,n_0$. Consequently, $-1\leq V(\bm{x})\leq 1$ holds for $\bm{x}\in \mathbb{R}^n$. 

Unlike  \cite{xue2018}, we introduce a positive valued parameter $\alpha$ and new bounded functions $(h'_{j}(\bm{x}))_{j=1}^{n_0}$ into the construction of the value function \eqref{vf}.  This  enables us to reduce $V(\bm{x})$ to the unique bounded solution to \eqref{eq} when $\alpha\in (0,1)$.  When $\alpha=1$,  \eqref{vf} is just a bounded solution to  \eqref{eq}.  As discussed in \cite{xue2018}, the uniqueness of bounded solutions to \eqref{eq} with $\alpha=1$ cannot be guaranteed and thus we cannot obtain the maximal robust invariant set via solving \eqref{eq} generally. However, it facilitates the computation of robust invariant sets via solving a semi-definite program, as shown in Subsection \ref{AA}.

The following proposition shows the relationship between the value function $V(\bm{x})$ and the maximal robust invariant set $\mathcal{R}_0$. The zero level set of $V(\bm{x})$ is equal to the maximal robust invariant set $\mathcal{R}_0$ when $\alpha\in (0,1)$, and the zero sublevel set of $V(\bm{x})$ is equal to $\mathcal{R}_0$ when $\alpha=1$.  
\begin{proposition}
\label{sets}
$\mathcal{R}_0=\{\bm{x}\in \mathbb{R}^n\mid V(\bm{x})\leq 0\}$, where $\mathcal{R}_0$ is the maximal robust invariant set in Definition \ref{RNS}. Furthermore, when $\alpha\in (0,1)$, $V(\bm{x})\geq 0$ for $\bm{x}\in \mathbb{R}^n$ and, thus, $\mathcal{R}_0=\{\bm{x}\in \mathbb{R}^n\mid V(\bm{x})=0\}$.
\end{proposition}
\begin{proof}
Let $\bm{y}\in \mathcal{R}_0$.  According to Definition~\ref{RNS}, we have 
\begin{equation}
\label{ineq}
h_{j}(\bm{\phi}_{\bm{y}}^{\pi}(i)) \leq 0, \forall i\in \mathbb{N}, \forall \pi \in \Pi_{\bm{y}}, \forall j\in \{1,\ldots,n_0\}    
\end{equation}
holds, implying that $$h'_{j}(\bm{\phi}_{\bm{y}}^{\pi}(i)) \leq 0, \forall i\in \mathbb{N}, \forall \pi \in \Pi_{\bm{y}}, \forall j\in \{1,\ldots,n_0\}$$ and, thus, $V(\bm{y})\leq 0$.
Therefore, $\bm{y}\in \{\bm{x}\mid V(\bm{x})\leq 0\}$.

On the other hand, if $\bm{y}\in \{\bm{x}\in \mathbb{R}^n\mid V(\bm{x})\leq 0\}$, then $V(\bm{y})\leq 0$, implying that \eqref{ineq} holds and consequently $\bm{y}\in \mathcal{R}_0$. Therefore, $\mathcal{R}_0=\{\bm{x}\in \mathbb{R}^n\mid V(\bm{x})\leq 0\}$. 

As to the case of $\alpha\in (0,1)$, it is obvious that $V(\bm{x})\geq 0$ for $\bm{x}\in \mathbb{R}^n$ since $$\lim_{l\rightarrow\infty}\alpha^l \max_{j\in\{1,\ldots,n_0\}}\{h'_{j}(\bm{\phi}_{\bm{x}}^{\pi}(l))\}=0, \forall \bm{x} \in \mathbb{R}^n, \forall \pi \in \Pi_{\bm{x}}.$$ Therefore, $\mathcal{R}_0=\{\bm{x}\in \mathbb{R}^n\mid V(\bm{x})=0\} \text{~if~} \alpha\in (0,1).$
\end{proof}

Proposition \ref{sets} tells us that the maximal robust invariant set $\mathcal{R}_0$ can be obtained if the value function $V(\bm{x})$ in \eqref{vf} is computed. However, it is too challenging to tackle the value function $V(\bm{x})$ in \eqref{vf} directly, since it involves the manipulation of trajectories for DSPS over the infinite time horizon. Therefore, we reduce it to the unique bounded solution to the Bellman type equation \eqref{eq} when $\alpha\in (0,1)$ and, thus, the problem of computing the maximal robust invariant set is 
reduced to solving the equation \eqref{eq}. The link between the value function \eqref{vf} and \eqref{eq} is established via the following dynamic programming principle.

\begin{lemma}
\label{dp}
 For $\bm{x}\in \mathbb{R}^n$ and $l\in\mathbb{N}$, we have:
\begin{equation}
\label{dp1}
\begin{split}
V(\bm{x})=\sup_{ \pi \in \Pi_{\bm{x}}}\max&\big\{\alpha^l V(\bm{\phi}_{\bm{x}}^{\pi}(l)),\sup_{i\in [0,l)\cap\mathbb{N}}\max_{j\in \{1,\ldots,n_0\}} \alpha^i  h'_{j}(\bm{\phi}_{\bm{x}}^{\pi}(i))\big\}.
\end{split}
\end{equation}
\end{lemma}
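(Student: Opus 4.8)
I plan to prove the two inequalities $V(\bm{x})\le R(\bm{x})$ and $V(\bm{x})\ge R(\bm{x})$ separately, where $R(\bm{x})$ denotes the right-hand side of \eqref{dp1}. The single structural fact driving both directions is the time-homogeneity (semigroup) property of the trajectory: since $\bm{f}(\bm{x},\bm{d})$ carries no explicit dependence on the time index, for any policy $\pi=\{\bm{d}(i)\}_{i\in\mathbb{N}}$, writing $\pi_l=\{\bm{d}(l+i)\}_{i\in\mathbb{N}}$ for its $l$-shift and $\bm{y}=\bm{\phi}_{\bm{x}}^{\pi}(l)$, one has $\bm{\phi}_{\bm{x}}^{\pi}(l+i)=\bm{\phi}_{\bm{y}}^{\pi_l}(i)$ for every $i\in\mathbb{N}$. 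I will also exploit that $\mathcal{D}$, being the set of all $D$-valued sequences, is closed under shifting and under concatenation of a finite prefix of one policy with the tail of another. For brevity write $G(\bm{x},\pi):=\sup_{m\in\mathbb{N}}\max_{j}\alpha^m h'_{0,j}(\bm{\phi}_{\bm{x}}^{\pi}(m))$, so that $V(\bm{x})=\sup_{\pi\in\mathcal{D}}G(\bm{x},\pi)$ by \eqref{vf}.

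For the inequality $\le$, I fix $\pi$ and split the time supremum defining $G(\bm{x},\pi)$ into the ranges $[0,l)\cap\mathbb{N}$ and $[l,\infty)\cap\mathbb{N}$, turning it into the maximum of the two partial suprema. On the tail range I substitute $m=l+i$ and apply the semigroup identity, which factors out $\alpha^l$ and rewrites the tail supremum as $\alpha^l G(\bm{y},\pi_l)$; bounding $G(\bm{y},\pi_l)\le V(\bm{y})$ (as $\pi_l\in\mathcal{D}$ and $V$ is the supremum over all policies) and recalling $\alpha^l>0$ yields $G(\bm{x},\pi)\le \max\{\sup_{i\in[0,l)\cap\mathbb{N}}\max_j \alpha^i h'_{0,j}(\bm{\phi}_{\bm{x}}^{\pi}(i)),\,\alpha^l V(\bm{\phi}_{\bm{x}}^{\pi}(l))\}$. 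Taking $\sup_{\pi}$ then gives $V(\bm{x})\le R(\bm{x})$.

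For the reverse inequality I bound each of the two arguments of the maximum in $R(\bm{x})$ by $V(\bm{x})$, for every fixed $\pi$. The prefix term is immediate, since $\sup_{i\in[0,l)\cap\mathbb{N}}\max_j\alpha^i h'_{0,j}(\bm{\phi}_{\bm{x}}^{\pi}(i))\le G(\bm{x},\pi)\le V(\bm{x})$. For the term $\alpha^l V(\bm{\phi}_{\bm{x}}^{\pi}(l))$ I use concatenation: given any tail policy $\pi'\in\mathcal{D}$, let $\tilde\pi$ agree with $\pi$ on $[0,l)$ and satisfy $\tilde\pi_l=\pi'$; since $\bm{\phi}_{\bm{x}}^{\tilde\pi}(l)=\bm{\phi}_{\bm{x}}^{\pi}(l)=\bm{y}$, the semigroup identity gives $\bm{\phi}_{\bm{x}}^{\tilde\pi}(l+i)=\bm{\phi}_{\bm{y}}^{\pi'}(i)$, whence $V(\bm{x})\ge G(\bm{x},\tilde\pi)\ge \alpha^l G(\bm{y},\pi')$. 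As this uniform bound holds for every $\pi'$, taking the supremum over $\pi'$ recovers $\alpha^l V(\bm{\phi}_{\bm{x}}^{\pi}(l))\le V(\bm{x})$. Thus the maximum of the two arguments is $\le V(\bm{x})$ for each $\pi$, and $\sup_\pi$ yields $R(\bm{x})\le V(\bm{x})$. Combining the two directions establishes \eqref{dp1}.

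I expect the only real subtlety — rather than a genuine obstacle — to be the bookkeeping around policy shifting and concatenation: one must verify that $\pi_l$ and the concatenated $\tilde\pi$ are again admissible elements of $\mathcal{D}$ (true because $\mathcal{D}$ consists of \emph{all} $D$-valued sequences) and that the trajectories genuinely coincide on the relevant time windows through the semigroup identity. The positivity $\alpha^l>0$ is precisely what lets the scalar factor $\alpha^l$ pass through the suprema and the maximum without reversing inequalities, while the boundedness $|h'_{0,j}|<1$ ensures every supremum is finite so no $\pm\infty$ ambiguities arise (with the convention that for $l=0$ the empty prefix supremum is vacuous and both sides collapse to $V(\bm{x})$ via $\bm{\phi}_{\bm{x}}^{\pi}(0)=\bm{x}$). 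Notably, because each tail contribution is bounded directly by $V(\bm{x})$ for every $\pi'$, no $\varepsilon$-approximation of the suprema is required in either direction.
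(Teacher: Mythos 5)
Your proof is correct. It rests on exactly the same structural ingredients as the paper's proof of Lemma~\ref{dp} --- the time-homogeneity (semigroup) identity $\bm{\phi}_{\bm{x}}^{\pi}(l+i)=\bm{\phi}_{\bm{y}}^{\pi_l}(i)$ with $\bm{y}=\bm{\phi}_{\bm{x}}^{\pi}(l)$, closure of the policy set under shifting and prefix--tail concatenation, and the splitting of the time horizon at $l$ --- but the handling of the suprema is genuinely different. The paper fixes $\epsilon_1>0$ and works with near-optimal policies: a near-optimizer $\pi'$ of $V(\bm{x})$ is split into prefix and shifted tail to get $W(l,\bm{x})\geq V(\bm{x})-\epsilon_1$, and a near-optimizer of $W(l,\bm{x})$ is concatenated with a near-optimizer of $V(\bm{y})$ to get $W(l,\bm{x})\leq V(\bm{x})+2\epsilon_1$, after which $\epsilon_1\to 0$. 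You avoid all $\epsilon$-bookkeeping: for each fixed $\pi$ you bound the tail of $G(\bm{x},\pi)$ by $\alpha^l V(\bm{y})$ and take $\sup_\pi$ to get $V\leq R$, and in the reverse direction the key observation that the concatenation bound $V(\bm{x})\geq \alpha^l G(\bm{y},\pi')$ holds uniformly over \emph{every} tail policy $\pi'$ lets you recover $\alpha^l V(\bm{y})\leq V(\bm{x})$ by a supremum over $\pi'$ rather than by choosing a near-optimal one. What your route buys is a cleaner, approximation-free argument in which each inequality is exact and the degenerate case $l=0$ and finiteness of the suprema are made explicit; what the paper's route buys is the standard dynamic-programming template, which generalizes more readily to settings where the uniform bound over tails is not available and one truly must pass through near-optimizers. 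Both arguments are complete and of comparable length, so this is a stylistic rather than substantive improvement.
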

\begin{proof}
Let 
\begin{equation}
\begin{split}
W(l,\bm{x}):=\sup_{ \pi \in \Pi_{\bm{x}}}&\max\big\{\alpha^l V(\bm{\phi}_{\bm{x}}^{\pi}(l)), \sup_{i\in [0,l)\cap \mathbb{N}}\max_{j\in \{1,\ldots,n_0\}} \alpha^i h'_{j}(\bm{\phi}_{\bm{x}}^{\pi}(i))\big\}.
\end{split}
\end{equation}
We will prove that for $\epsilon>0$, $|W(l,\bm{x})-V(\bm{x})|<\epsilon$.

According to the definition of $V(\bm{x})$, i.e., \eqref{vf}, for any $\epsilon_1>0$, there exists a disturbance policy  $\pi'=(\bm{d}'(i))_{i\in \mathbb{N}}\in \Pi_{\bm{x}}$ such that
\[V(\bm{x})\leq \sup_{i\in \mathbb{N}}\max_{j\in \{1,\ldots,n_0\}}\{\alpha^i h'_{j}(\bm{\phi}_{\bm{x}}^{\pi'}(i))\}+\epsilon_1.\]

We then introduce two disturbance policies $\pi_1=(\bm{d}_1(i))_{i\in \mathbb{N}} \in \Pi_{\bm{x}}$ and $\pi_2=(\bm{d}_2(i))_{i\in \mathbb{N}}\in \Pi_{\bm{y}}$ with $\bm{d}_1(j)=\bm{d}'(j)$ for $j=0,\ldots,l-1$ and $\bm{d}_2(j) =\bm{d}'(j+l)$ for $j\in \mathbb{N}$ respectively,  where $\bm{y}=\bm{\phi}_{\bm{x}}^{\pi_1}(l)$.  Thus we obtain that
\begin{equation*}
\label{w}
\begin{split}
W(l,\bm{x})&\geq\max\big\{\alpha^l V(\bm{y}),\sup_{i\in [0,l)\cap \mathbb{N}}\max_{j\in \{1,\ldots,n_0\}} \alpha^i h'_{j}(\bm{\phi}_{\bm{x}}^{\pi_1}(i))\big\}\\
&\geq \max\Big\{\sup_{i\in [l,+\infty)\cap \mathbb{N}}\max_{j\in \{1,\ldots,n_0\}}\{\alpha^i h'_{j}(\bm{\phi}_{\bm{y}}^{\pi_2}(i-l))\},\sup_{i\in [0,l)\cap\mathbb{N}}\max_{j\in \{1,\ldots,n_0\}}\{\alpha^i h'_{j}(\bm{\phi}_{\bm{x}}^{\pi_1}(i))\}\Big\}\\
&=\max\Big\{\sup_{i\in [l,+\infty)\cap \mathbb{N}}\max_{j\in \{1,\ldots,n_0\}}\{\alpha^i  h'_{j}(\bm{\phi}_{\bm{x}}^{\pi'}(i))\}, \sup_{i\in [0,l)\cap \mathbb{N}}\max_{j\in \{1,\ldots,n_0\}} \{\alpha^i h'_{j}(\bm{\phi}_{\bm{x}}^{\pi'}(i))\}\Big\}\\
&=\sup_{i\in \mathbb{N}}\max_{j\in \{1,\ldots,n_0\}}\{\alpha^i  h'_{j}(\bm{\phi}_{\bm{x}}^{\pi'}(i))\}\\
&\geq V(\bm{x})-\epsilon_1.
\end{split}
\end{equation*}
Therefore,
\begin{equation}
\label{geq}
V(\bm{x})\leq W(l, \bm{x})+\epsilon_1.
\end{equation}

On the other hand, by the definition of $W(l,\bm{x})$, for every $\epsilon_1>0$, there exists $\pi_1=(\bm{d}_1(i))_{i\in \mathbb{N}}\in \Pi_{\bm{x}}$ 
such that
\begin{equation}
\begin{split}
W(l,\bm{x})\leq& \max\big\{\alpha^l V(\bm{\phi}_{\bm{x}}^{\pi_1}(l)),\sup_{i\in [0,l)\cap \mathbb{N}}\max_{j\in \{1,\ldots,n_0\}} \{\alpha^i  h'_{j}(\bm{\phi}_{\bm{x}}^{\pi_1}(i))\}\big\}+\epsilon_1.
\end{split}
\end{equation}
Also, by the definition of $V(\bm{x})$, i.e., \eqref{vf}, for every $\epsilon_1>0$, there exists $\pi_2=(\bm{d}_2(i))_{i\in \mathbb{N}}\in \Pi_{\bm{y}}$ such that
\[ V(\bm{y})\leq \sup_{i\in \mathbb{N}}\max_{j\in \{1,\ldots,n_0\}} \{\alpha^i h'_{j}(\bm{\phi}_{\bm{y}}^{\pi_2}(i))\}+\epsilon_1,\]
where $\bm{y}=\bm{\phi}_{\bm{x}}^{\pi_1}(l)$.
  We define $ \pi=(\bm{d}(i))_{i\in \mathbb{N}}$ such that $\bm{d}(i) = \bm{d}_1(i)$ for $i=0,\ldots,l-1$ and $\bm{d}(i+l)=\bm{d}_2(i)$ for $i\in \mathbb{N}$. Obviously, $\pi \in \Pi_{\bm{x}}$.
Then, it follows
\begin{equation*}
\begin{split}
W(l, \bm{x})\leq &\max\Big\{\sup_{i\in \mathbb{N}\cap [l,\infty)}\max_{j\in \{1,\ldots,n_0\}} \{\alpha^i  h'_{j}(\bm{\phi}_{\bm{y}}^{\pi_2}(i-l))\},\sup_{i\in [0,l)\cap \mathbb{N}}\max_{j\in \{1,\ldots,n_0\}}\{\alpha^i  h'_{j}(\bm{\phi}_{\bm{x}}^{\pi_1}(i))\}\Big\}+2\epsilon_1\\
\leq &\sup_{i\in [0,+\infty)\cap \mathbb{N}}\max_{j\in \{1,\ldots,n_0\}} \{\alpha^i h'_{j}(\bm{\phi}_{\bm{x}}^{\pi}(i))\}+2\epsilon_1\\
\leq &V(\bm{x}) +2\epsilon_1.
\end{split}
\end{equation*}
Therefore, 
\begin{equation}
\label{leq}
V(\bm{x})\geq W(l,\bm{x})-2\epsilon_1.
\end{equation}

Combining \eqref{geq} and \eqref{leq}, we finally have $|V(\bm{x})-W(l,\bm{x})|\leq \epsilon=2\epsilon_1$. Since $\epsilon_1$ is arbitrary, $V(\bm{x})=W(l,\bm{x})$ holds for $\bm{x}\in \mathbb{R}^n$ and $l\in \mathbb{N}$. This completes the proof. 
\end{proof}

Based on Lemma \ref{dp}, we derive \eqref{eq}, to which $V(\bm{x})$ is a unique bounded solution when $\alpha\in (0,1)$.
\begin{theorem}
\label{equations}
If $\alpha\in (0,1]$, the value function $V(\bm{x}):\mathbb{R}^n\rightarrow \mathbb{R}$ in \eqref{vf} is a bounded solution to the Bellman type equation \eqref{eq}. Moreover, $V(\bm{x})$ is the unique bounded solution to \eqref{eq} when $\alpha\in (0,1)$.
\end{theorem}
\begin{proof}
When $l=1$, \eqref{dp1} is reduced to 
\begin{equation*}
\begin{split}
    V(\bm{x})=\sup_{\pi \in \Pi_{\bm{x}}}\max \big\{&\alpha V(\bm{\phi}_{\bm{x}}^{\pi}(1)),\sup_{i\in [0,1)\cap\mathbb{N}}\max_{j\in \{1,\ldots,n_0\}} \alpha^i  h'_{j}(\bm{\phi}_{\bm{x}}^{\pi}(i))\big\}, 
    \end{split}
\end{equation*}
which is further equivalent to 
\begin{equation*}
\begin{split}
    V(\bm{x})=\sup_{\bm{d} \in D}\max \big\{\alpha V(\bm{f}(\bm{x},\bm{d})), \max_{j\in \{1,\ldots,n_0\}} h'_{j}(\bm{x})\big\}.
    \end{split}
\end{equation*}
Thus, \eqref{eq} is the special case of \eqref{dp1} when $l=1$. In the rest we just prove the statement that $V(\bm{x})$ is the unique bounded solution to \eqref{eq} when $\alpha\in (0,1)$.

Assume that $U(\bm{x})$ is a bounded solution to \eqref{eq} as well, and there exists $\bm{y}\in \mathbb{R}^n$ such that $U(\bm{y})\neq V(\bm{y})$. Without loss of generality, we assume that $U(\bm{y})<V(\bm{y})$, i.e., there exists $\delta>0$ such that $V(\bm{y})-U(\bm{y})=\delta$. 

Since $U(\bm{y})-\max_{j\in \{1,\ldots,n_0\}} h'_{j}(\bm{y})\geq 0$, $$V(\bm{y})-\max_{j\in \{1,\ldots,n_0\}} h'_{j}(\bm{y})>0$$ holds. Consequently, we have that $$V(\bm{y})=\alpha \sup_{\bm{d}\in D}V(\bm{f}(\bm{y},\bm{d})).$$ Also, due to the fact that $U(\bm{y})\geq \alpha \sup_{\bm{d}\in D}U(\bm{f}(\bm{y},\bm{d}))$, $$\alpha \sup_{\bm{d}\in D}V(\bm{f}(\bm{y},\bm{d}))-\alpha\sup_{\bm{d}\in D}U(\bm{f}(\bm{y},\bm{d}))\geq \delta$$ holds, implying that 
$$\alpha \sup_{\bm{d}\in D}(V(\bm{f}(\bm{y},\bm{d}))-U(\bm{f}(\bm{y},\bm{d})))\geq \delta.$$
Therefore, for $1<\beta<\frac{1}{\alpha}$, there exists $\bm{d}\in D$ such that 
$$\alpha (V(\bm{f}(\bm{y},\bm{d}))-U(\bm{f}(\bm{y},\bm{d})))\geq \beta \alpha \delta.$$
 Let $\bm{d}_1$ satisfy $$V(\bm{f}(\bm{y},\bm{d}_1))-U(\bm{f}(\bm{y},\bm{d}_1))\geq \beta \delta,$$ and $\bm{y}_1=\bm{f}(\bm{y},\bm{d}_1)$. It is obvious that $$V(\bm{y}_1)-U(\bm{y}_1)\geq \beta \delta.$$ Repeating the above procedure, we can construct a sequence $(\bm{y}_j)_{j=1}^{\infty}$ satisfying $V(\bm{y}_j)-U(\bm{y}_j)\geq \beta^j \delta$. Thus, $$V(\bm{y}_j)\geq U(\bm{y}_j)+\beta^j \delta, \forall j\in \mathbb{N}.$$ 
 Since $U$ is bounded over $\mathbb{R}^n$ and $$\lim_{j\rightarrow \infty}\beta^j \delta=\infty,$$ we have that $V(\bm{y}_j)$ approaches infinity when $j$ tends to infinity, contradicting that $V$ is bounded over $\mathbb{R}^n$. Therefore, $V(\bm{y})=U(\bm{y})$. Based on the above deduction technique, a similar contradiction can be obtained for the case when $U(\bm{y})>V(\bm{y})$. 

Thus, this concludes that the value function $V(\bm{x}):\mathbb{R}^n\rightarrow \mathbb{R}$ in \eqref{vf} is the unique bounded solution to \eqref{eq}. 
\end{proof}

\begin{remark}
In this paper we take the function $h'_j(x)$ of the particular form $\frac{h_{j}(\bm{x})}{1+h_{j}^2(\bm{x})}$, $j=1,\ldots,n_0$. Actually, 
it can take various forms such as $\frac{h_{j}(\bm{x})}{1+|h_{j}(\bm{x})|}$ which can still make Proposition \ref{sets}, Lemma \ref{dp}
and Theorem \ref{equations} hold, but it should satisfy the following properties:
\begin{enumerate}
\item $\{\bm{x}\mid \bigwedge_{j=1}^{n_0} h'_j(\bm{x})\leq 0\}=\{\bm{x}\mid \bigwedge_{j=1}^{n_0} h_j(\bm{x})\leq 0\}=X$;
\item $h'_j(\bm{x})$ is bounded over $\mathbb{R}^n$, $j=1,\ldots, n_0$. 
\end{enumerate}
The difference with various $h'_j(\bm{x})$'s in computing  the maximal robust invariant set will be investigated as the future work.
\end{remark}

From Theorem \ref{equations}, we conclude that the maximal robust invariant set $\mathcal{R}_0$ can be obtained by solving \eqref{eq}. A technique for solving \eqref{eq} with $\alpha \in (0,1)$ is the value iteration in the framework of reinforcement learning.
\begin{theorem}
\label{iteration}
Suppose the sequence of functions $(V_i(\bm{x}))_{i\in \mathbb{N}}$ with $V_i(\cdot):\mathbb{R}^n\rightarrow \mathbb{R}$ is generated by the value iteration starting from some bounded function $V_0(\cdot):\mathbb{R}^n\rightarrow \mathbb{R}$ according to 
\begin{equation}
\label{iter}
\begin{split}
V_{i+1}(\bm{x})=\max\big\{\sup_{\bm{d}\in D}\alpha V_i(\bm{f}(\bm{x},\bm{d})), \max_{j\in \{1,\ldots,n_0\}}h'_{j}(\bm{x})\big\}
\end{split}
\end{equation}
for $\bm{x}\in \mathbb{R}^n$ and $i\in \mathbb{N}$,
then $V_i(\bm{x})$ uniformly approximates $V(\bm{x})$ over $\mathbb{R}^n$ if $\alpha\in (0,1)$ as $i$ tends to infinity, where $V(\bm{x})$ is the unique bounded solution to \eqref{eq}. 
\end{theorem}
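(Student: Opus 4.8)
The plan is to recast the value iteration \eqref{iter} as the repeated application of a single operator and then invoke the Banach fixed point theorem on the space of bounded functions. First I would introduce the operator $T$ acting on bounded functions $U:\mathbb{R}^n\to\mathbb{R}$ by $(TU)(\bm{x})=\sup_{\bm{d}\in D}\max\{\alpha U(\bm{f}(\bm{x},\bm{d})),\max_{j\in\{1,\ldots,n_0\}}h'_{0,j}(\bm{x})\}$, so that \eqref{iter} reads simply $V_{i+1}=TV_i$. Since each $h'_{0,j}$ takes values in $(-1,1)$ and $\alpha U$ is bounded whenever $U$ is, $T$ maps the space $\mathcal{B}(\mathbb{R}^n)$ of bounded functions, equipped with the supremum norm $\|\cdot\|_\infty$ (a Banach space), into itself; in particular every iterate $V_i$ remains bounded.

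The next step is to verify that the target function $V$ is a fixed point of $T$. Rewriting the modified Bellman equation \eqref{eq}, the condition $\min\{\cdot,\cdot\}=0$ is equivalent to the two inequalities $V(\bm{x})\ge\alpha\sup_{\bm{d}\in D}V(\bm{f}(\bm{x},\bm{d}))$ and $V(\bm{x})\ge\max_{j}h'_{0,j}(\bm{x})$, together with the requirement that at least one of them hold with equality. A short case analysis on which of the two is an equality then shows $V(\bm{x})=\max\{\alpha\sup_{\bm{d}\in D}V(\bm{f}(\bm{x},\bm{d})),\max_{j}h'_{0,j}(\bm{x})\}=(TV)(\bm{x})$ for every $\bm{x}\in\mathbb{R}^n$, so $V=TV$.

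The heart of the argument is to show that $T$ is a contraction of modulus $\alpha$. Using the elementary inequalities $|\sup_a F_1(a)-\sup_a F_2(a)|\le\sup_a|F_1(a)-F_2(a)|$ and $|\max\{s,c\}-\max\{t,c\}|\le|s-t|$ (the common argument $\max_{j}h'_{0,j}(\bm{x})$ cancels), I would bound, pointwise in $\bm{x}$, the quantity $|(TU_1)(\bm{x})-(TU_2)(\bm{x})|\le\alpha\sup_{\bm{d}\in D}|U_1(\bm{f}(\bm{x},\bm{d}))-U_2(\bm{f}(\bm{x},\bm{d}))|\le\alpha\|U_1-U_2\|_\infty$. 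Taking the supremum over $\bm{x}$ yields $\|TU_1-TU_2\|_\infty\le\alpha\|U_1-U_2\|_\infty$, and since $\alpha\in(0,1)$ this is a strict contraction on $\mathcal{B}(\mathbb{R}^n)$.

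Finally, since $V$ is a fixed point of the contraction $T$ and $V_{i+1}=TV_i$, iterating the contraction estimate gives $\|V_i-V\|_\infty\le\alpha^i\|V_0-V\|_\infty$, which tends to $0$ as $i\to\infty$ because $\alpha\in(0,1)$; this is exactly the claimed uniform convergence of $V_i$ to $V$ over $\mathbb{R}^n$. Uniqueness of the bounded fixed point, already established in Theorem \ref{equations}, guarantees that the limit is indeed the value function $V$. I expect the only genuinely delicate step to be the case analysis translating the $\min\{\cdot,\cdot\}=0$ form of \eqref{eq} into the clean fixed-point identity $V=TV$; once that is in place, the contraction estimate is routine given the sup/max Lipschitz inequalities, and completeness of $(\mathcal{B}(\mathbb{R}^n),\|\cdot\|_\infty)$ is standard.
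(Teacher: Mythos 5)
Your proof is correct, and its mathematical core---the fact that the value-iteration operator $T$ contracts sup-norm distances by the factor $\alpha$---is exactly the estimate that drives the paper's own proof; the difference is in how it is packaged. The paper applies this contraction by hand, twice: first to successive iterates, obtaining via \eqref{iter1}--\eqref{iter2} the bound $|V_{i+1}(\bm{x})-V_i(\bm{x})|\le \alpha^{i}\sup_{\bm{z}}|V_1(\bm{z})-V_0(\bm{z})|$, which makes the sequence uniformly Cauchy with some bounded limit $V'$; and then a second time with $V$ substituted for the previous iterate, to identify $V'=V$. You collapse this two-stage argument into a single line by first verifying the fixed-point identity $V=TV$ and then iterating $\|TV_i-TV\|_\infty\le\alpha\|V_i-V\|_\infty$ to get $\|V_i-V\|_\infty\le\alpha^{i}\|V_0-V\|_\infty\rightarrow 0$. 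This is cleaner and also more complete in one respect: the equivalence between the $\min$-form of \eqref{eq} and the identity $V(\bm{x})=\max\{\alpha\sup_{\bm{d}\in D}V(\bm{f}(\bm{x},\bm{d})),\max_{j}h'_{0,j}(\bm{x})\}$ is precisely the step the paper leaves implicit in the phrase ``replacing $V_i$ with $V$,'' and you state it explicitly (indeed no case analysis is needed, since $\min\{V-a,V-b\}=V-\max\{a,b\}$ holds identically). Two minor remarks: you do not actually need the Banach fixed point theorem or completeness of the space of bounded functions, since the bounded fixed point $V$ is already supplied by Theorem \ref{equations} (with $-1\le V\le 1$), so the bare contraction inequality suffices; and the only thing the paper's longer route buys is that its Cauchy stage produces a bounded limit without reference to $V$, which could serve as an existence proof for a bounded solution of \eqref{eq}---but as existence is already settled by Theorem \ref{equations}, nothing is lost in your version.
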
 
\begin{proof}
According to \eqref{iter}, we have 
\begin{equation}
\label{iter1}
\begin{split}
&V_{i+1}(\bm{x})-V_{i}(\bm{x})\\
&= \max\Big\{\alpha \sup_{\bm{d}_i\in D} V_i\big(\bm{f}(\bm{x},\bm{d}_i)\big),\max_{j\in \{1,\ldots,n_0\}}h'_{j}(\bm{x})\Big\}-\max\Big\{\alpha \sup_{\bm{d}_i\in D} V_{i-1}\big(\bm{f}(\bm{x},\bm{d}_i)\big),\max_{j\in \{1,\ldots,n_0\}}h'_{j}(\bm{x})\Big\}\\
&\leq \max\Big\{\alpha \sup_{\bm{d}_i\in D}\Big(V_i\big(\bm{f}(\bm{x},\bm{d}_i)\big)-V_{i-1}\big(\bm{f}(\bm{x},\bm{d}_i)\big)\Big),0\Big\}\\
&\leq \max\Big\{\alpha^{i}\sup_{\bm{d}_1\in D}\cdots\sup_{\bm{d}_i\in D}\Big(V_1\big(\bm{g}(\bm{x},\bm{d}_1,\cdots,\bm{d}_i)\big)-V_{0}\big(\bm{g}(\bm{x},\bm{d}_1,\cdots,\bm{d}_i)\big)\Big),0\Big\}
\end{split}
\end{equation}
and 
\begin{equation}
\label{iter2}
\begin{split}
&V_{i+1}(\bm{x})-V_{i}(\bm{x})\\
&= -\min\Big\{\alpha \inf_{\bm{d}_i\in D}-V_i\big(\bm{f}(\bm{x},\bm{d}_i)\big),-\max_{j\in \{1,\ldots,n_0\}}h'_{j}(\bm{x})\Big\}+\min\Big\{\alpha \inf_{\bm{d}_i\in D} -V_{i-1}\big(\bm{f}(\bm{x},\bm{d}_i)\big),-\max_{j\in \{1,\ldots,n_0\}}h'_{j}(\bm{x})\Big\}\\
&\geq \min\Big\{\alpha \inf_{\bm{d}_i\in D}\Big(V_i\big(\bm{f}(\bm{x},\bm{d}_i)\big)-V_{i-1}\big(\bm{f}(\bm{x},\bm{d}_i)\big)\Big),0\Big\}\\
&\geq \min\Big\{\alpha^{i}\inf_{\bm{d}_1\in D}\cdots \inf_{\bm{d}_i\in D}\Big(V_1\big(\bm{g}(\bm{x},\bm{d}_1,\cdots,\bm{d}_i)\big)-V_{0}\big(\bm{g}(\bm{x},\bm{d}_1,\cdots,\bm{d}_i)\big)\Big),0\Big\},
\end{split}
\end{equation}
 where \[\bm{g}(\bm{x},\bm{d}_1,\cdots,\bm{d}_i)=\underbrace{\bm{f}\bigg(\bm{f}\Big(\cdots\bm{f}\big( \bm{f}}_i(\bm{x},\bm{d}_i),\bm{d}_{i-1}\big),\cdots,\bm{d}_2\Big),\bm{d}_1\bigg).\] 
Moreover, since $V_0$ and $\max_{j\in \{1,\ldots,n_0\}}h'_{j}$ are bounded over $\mathbb{R}^n$, therefore, $V_1$ is bounded as well. Thus, according to \eqref{iter1},  \eqref{iter2} and $\alpha\in (0,1)$, we have that $V_i(\bm{x})$ uniformly approximates a function $V'(\bm{x})$ over $\mathbb{R}^n$ as $i$ tends to infinity. In the rest  we just need to prove that $V'(\bm{x})=V(\bm{x})$ over $\bm{x}\in \mathbb{R}^n$. This conclusion can be assured by replacing $V_{i+1}(\bm{x})-V_i(\bm{x})$ in \eqref{iter1} and \eqref{iter2} with $V_{i+1}(\bm{x})-V(\bm{x})$, resulting in that $V_{i+1}(\bm{x})$ uniformly approximates $V(\bm{x})$ over $\mathbb{R}^n$ as $i$ tends to infinity, where $V(\bm{x})=\max\{\alpha \sup_{\bm{d}\in D} V(\bm{f}(\bm{x},\bm{d})),\max_{j\in \{1,\ldots,n_0\}}h'_j(\bm{x})\}$.
\end{proof}

\begin{remark}
\label{rate1}
From Theorem \ref{iteration}, we observe that the initial function $V_0(\bm{x})$ for the value iteration \eqref{iter} can be an arbitrary bounded function from $\mathbb{R}^n$ to $\mathbb{R}$. Let $|V_0(\bm{x})|\leq M$ for $\bm{x}\in \mathbb{R}^n$ and $\alpha\in (0,1)$, where $M\geq 0$.
From \eqref{iter1} and \eqref{iter2} we can obtain that
\begin{equation}
\label{rate}
 \sup_{\bm{x}\in \mathbb{R}^n}|V_{i+1}(\bm{x})-V_i(\bm{x})|\leq \alpha^i \max \{2M,1+M\}.
 \end{equation}
Consequently, $V_i(\bm{x})$ converges to $V(\bm{x})$  over $\mathbb{R}^n$ with  the rate of convergence $\alpha$. This also  implies that the smaller  the rate of convergence $\alpha$ is, the faster the convergence is. 

In addition, \eqref{rate} indicates that a smaller $M$ results in a faster convergence to the unique bounded solution to \eqref{eq}. Thus, $M=0$, i.e., $V_0(\bm{x})\equiv 0$ for $\bm{x}\in \mathbb{R}^n$, is the best choice. This claim holds when $V(\bm{x})$ in \eqref{vf} is unknown. Due to the fact that 
$V(\bm{x})=\max\{\alpha \sup_{\bm{d}\in D} V(\bm{f}(\bm{x},\bm{d})),\max_{j\in \{1,\ldots,n_0\}}h'_j(\bm{x})\}$, we have that $V_0(\bm{x})=V(\bm{x})$ for $\bm{x}\in \mathbb{R}^n$ is the best choice if $V(\bm{x})$ is known.
\end{remark}

The  value iteration for addressing \eqref{eq} with $\alpha\in (0,1)$ is described in Alg. \ref{alg1}.
\begin{algorithm}
\caption{The Value Iteration for Solving \eqref{eq}}
\begin{enumerate}
\item Set $V_0(\bm{x}):=0$ over $\bm{x}\in \mathbb{R}^n$ and $l:=0$, and decide on a grid $\Lambda=\{\bm{x}_1,\ldots,\bm{x}_N\}$ on $X$ for the state variable $\bm{x}$, and a grid $\Delta=\{\bm{d}_1,\ldots,\bm{d}_M\}$ in $D$ for the disturbance variable $\bm{d}$.
\item Choose a value in $(0,1)$ for $\alpha$;
\item Choose an accuracy tolerance $\epsilon>0$;
\item For each $\bm{x}_i\in \Lambda$, $i=1,\ldots,N$, compute 
$$\bm{x}'_{i,j}=\bm{f}(\bm{x}_i,\bm{d}_j),\forall j=1,\ldots, M,$$ then compute an interpolated value function at each $\bm{x}'_{i,j}: \tilde{V}_l(\bm{x}'_{i,j})$ and compute 
$$V_{l+1}(\bm{x}_i)=\max\{\alpha \max_{j\in \{1,\ldots,M\}}\tilde{V}_l(\bm{x}'_{i,j}),\max_{j\in \{1,\ldots,n_0\}}h'_{j}(\bm{x}_{i})\}.$$
\item If $\max_{\bm{x}\in \Lambda}|V_{l+1}(\bm{x})-V_l(\bm{x})|<\epsilon$, go to step 6); otherwise, $l:=l+1$ and go back to 3);
\item Obtain the final solution $V(\bm{x})$ as 
 $V(\bm{x})\approx V_{l+1}(\bm{x})$.
\end{enumerate}
\label{alg1}
\end{algorithm}

\begin{remark}
\label{Iter_no}
From \eqref{rate} we can obtain that given the accuracy tolerance $\epsilon$,  the maximum iteration number in Alg. \ref{alg1} is predictable a priori, i.e., the maximum iteration number is less than or equal to $\max\{\lceil \log_{\alpha} \epsilon \rceil,1\}$, where $\lceil \log_{\alpha} \epsilon \rceil$ is the smallest integer being larger than or equal to $\log_{\alpha} \epsilon$.
\end{remark}

The termination of the value iteration in Alg. \ref{alg1} is guaranteed by Theorem \ref{iteration}, which is also uncovered in Remark \ref{Iter_no}. In this way the value of $V(\bm{x})$ can be calculated on the grid points of the set $X$. However, this is a computationally expensive process and as the size of the state and disturbance spaces grows, it becomes intractable generally. 

When $\alpha=1$, we cannot guarantee the convergence of $(V_{i})_{i\in \mathbb{N}}$ in  \eqref{iter}. This is also reflected in Remark \ref{rate1}. Even if the sequence $(V_{i})_{i\in \mathbb{N}}$ converges, the convergence to $V(\bm{x})$ is not guaranteed, where $V(\bm{x})$ is the value function in \eqref{vf}, since \eqref{eq} may not have a unique bounded solution. However, the equation \eqref{eq} with $\alpha=1$ will enable us to construct a semi-definite program for synthesizing robust invariant sets efficiently.

\subsection{Semi-definite Programming Implementation}
\label{AA}
When the dimensions of the product space of state and disturbance spaces are appropriate, the value iteration described in Alg. \ref{alg1} could be employed to solve \eqref{eq} for estimating the maximal robust invariant set. However, it suffers severely from the curse of dimensionality, consequently highlighting the need of alternative numerical methods. Similar to  \cite{xue2018}, we in this section present a semi-definite program derived from  \eqref{eq} for synthesizing robust invariant sets. For the sake of completeness and ease of understanding, we also give an appropriate description on how to construct the semi-definite program.

From \eqref{eq} we observe that if a function $u(\bm{x}):\mathbb{R}^n\rightarrow \mathbb{R}$ satisfies \eqref{eq}, then it satisfies
\begin{equation}
\label{upper1}
\left\{
\begin{array}{lll}
u(\bm{x})-\alpha u(\bm{f}(\bm{x},\bm{d}))\geq 0, \forall \bm{d}\in D, \forall \bm{x}\in \mathbb{R}^n,\\
u(\bm{x})-h'_{j}(\bm{x})\geq 0, \forall \bm{x}\in \mathbb{R}^n,\forall j\in \{1,\ldots,n_0\}.
\end{array}
\right.
\end{equation}

\begin{corollary}
\label{upper}
For any function $u(\bm{x}): \mathbb{R}^n \rightarrow \mathbb{R}$ satisfying \eqref{upper1}, $\{\bm{x}\in \mathbb{R}^n\mid u(\bm{x})\leq 0\}$ is a robust invariant set when $\alpha\in (0,1]$.  Furthermore, if $\alpha\in (0,1)$, $u(\bm{x})\geq 0$ over $\bm{x}\in \mathbb{R}^n$ and consequently $$\{\bm{x}\in \mathbb{R}^n\mid u(\bm{x})\leq 0\}=\{\bm{x}\in \mathbb{R}^n\mid u(\bm{x})=0\}.$$
\end{corollary}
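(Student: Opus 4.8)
The plan is to prove the set inclusion $\{\bm{x}\in\mathbb{R}^n\mid u(\bm{x})\leq 0\}\subseteq\mathcal{R}_0$ directly from Definition~\ref{RNS}, by propagating the two inequalities of \eqref{upper1} along an arbitrary trajectory. First I would fix $\bm{x}_0$ with $u(\bm{x}_0)\leq 0$ and an arbitrary policy $\pi\in\mathcal{D}$. Evaluating the first inequality of \eqref{upper1} at $\bm{x}=\bm{\phi}_{\bm{x}_0}^{\pi}(i)$ and $\bm{d}=\bm{d}(i)$ gives $u(\bm{\phi}_{\bm{x}_0}^{\pi}(i))\geq \alpha\,u(\bm{\phi}_{\bm{x}_0}^{\pi}(i+1))$. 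Multiplying the $i$-th such relation by $\alpha^i>0$ shows the sequence $\alpha^i u(\bm{\phi}_{\bm{x}_0}^{\pi}(i))$ is non-increasing, which telescopes to the key bound $u(\bm{x}_0)\geq \alpha^l u(\bm{\phi}_{\bm{x}_0}^{\pi}(l))$ for every $l\in\mathbb{N}$.

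Next I would feed in the second inequality of \eqref{upper1}, namely $u(\bm{\phi}_{\bm{x}_0}^{\pi}(l))\geq \max_{j}h'_{0,j}(\bm{\phi}_{\bm{x}_0}^{\pi}(l))$, to obtain $0\geq u(\bm{x}_0)\geq \alpha^l \max_{j}h'_{0,j}(\bm{\phi}_{\bm{x}_0}^{\pi}(l))$. Since $\alpha^l>0$, this forces $\max_{j}h'_{0,j}(\bm{\phi}_{\bm{x}_0}^{\pi}(l))\leq 0$. The one step that needs care is translating this back into membership in $X_0$: because $h'_{0,j}(\bm{x})=h_{0,j}(\bm{x})/(1+h_{0,j}^2(\bm{x}))$ shares the sign of $h_{0,j}(\bm{x})$, the inequality $h'_{0,j}\leq 0$ is equivalent to $h_{0,j}\leq 0$, so $\bm{\phi}_{\bm{x}_0}^{\pi}(l)\in X_0$. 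As $l$ and $\pi$ are arbitrary, Definition~\ref{RNS} yields $\bm{x}_0\in\mathcal{R}_0$, hence $\{u\leq 0\}\subseteq\mathcal{R}_0$ and the set is a robust invariant set for every $\alpha\in(0,1]$.

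For the second assertion I would observe that the second constraint also supplies a free uniform lower bound: since each $h'_{0,j}>-1$, we have $u(\bm{x})\geq \max_{j}h'_{0,j}(\bm{x})>-1$ for all $\bm{x}$, so no separate boundedness hypothesis on $u$ is required. Inserting $u(\bm{\phi}_{\bm{x}_0}^{\pi}(l))\geq -1$ into the telescoped bound of the first paragraph gives $u(\bm{x}_0)\geq \alpha^l u(\bm{\phi}_{\bm{x}_0}^{\pi}(l))\geq -\alpha^l$ for every $l$. When $\alpha\in(0,1)$, letting $l\to\infty$ sends $-\alpha^l\to 0$, so $u(\bm{x}_0)\geq 0$; as $\bm{x}_0$ is arbitrary this gives $u\geq 0$ on $\mathbb{R}^n$, whence $\{u\leq 0\}=\{u=0\}$ is immediate.

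I expect the argument to be essentially mechanical once the telescoping along trajectories is set up. The only genuinely delicate points are the sign-preservation equivalence $h'_{0,j}\leq 0\iff h_{0,j}\leq 0\iff \bm{x}\in X_0$ used to recover $X_0$-membership, and the recognition that the lower bound $u>-1$ powering the limiting argument for $\alpha\in(0,1)$ is handed to us by the second inequality of \eqref{upper1} rather than requiring any boundedness assumption.
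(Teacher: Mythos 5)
Your proof is correct and follows essentially the same route as the paper's: telescoping the first inequality of \eqref{upper1} along an arbitrary trajectory, using the sign equivalence of $h'_{0,j}$ and $h_{0,j}$ to recover membership in $X_0$, and exploiting the uniform lower bound $u(\bm{x})\geq \max_{j}h'_{0,j}(\bm{x})>-1$ for the $\alpha\in(0,1)$ claim. The only cosmetic difference is that the paper establishes nonnegativity by contradiction (iterating a supposed negative value forward so that it diverges to $-\infty$, contradicting the lower bound), whereas you argue it directly via $u(\bm{x}_0)\geq -\alpha^l\to 0$; the two are the same argument in contrapositive form.
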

\begin{proof}
The statement that $\{\bm{x}\in \mathbb{R}^n\mid u(\bm{x})\leq 0\}$ is a robust invariant set when $\alpha\in (0,1]$ can be justified by following the proof of Corollary 1 in \cite{xue2018}.

In the sequel we prove that $u(\bm{x})\geq 0$ over $\bm{x}\in \mathbb{R}^n$ when $\alpha\in (0,1)$. Assume that there exists $\bm{y}_1\in \mathbb{R}^n$ such that $u(\bm{y}_1)<0$. Due to the fact that $$u(\bm{f}(\bm{y}_1,\bm{d}_1))\leq \frac{1}{\alpha}u(\bm{y}_1), \forall \bm{d}_1\in D,$$ we obtain that 
$u(\bm{g}(\bm{y}_1,\bm{d}_1,\ldots,\bm{d}_i))\leq \frac{1}{\alpha^i}u(\bm{y}_1),\forall \bm{d}_1,\ldots, \forall \bm{d}_i \in D,$ where \[\bm{g}(\bm{y}_1,\bm{d}_1,\ldots,\bm{d}_i)=\underbrace{\bm{f}\bigg(\bm{f}\Big(\cdots\bm{f}\big(\bm{f}}_i(\bm{y}_1,\bm{d}_1),\bm{d}_2\big),\cdots,\bm{d}_{i-1}\Big),\bm{d}_i\bigg).\] Therefore, we have that $$\lim_{i\rightarrow \infty}u(\bm{g}(\bm{y}_1,\bm{d}_1,\ldots,\bm{d}_i))=-\infty,$$ contradicting $$u(\bm{x})\geq \max_{j\in \{1,\ldots,n_0\}}h'_{j}(\bm{x}), \forall \bm{x}\in \mathbb{R}^n.$$ Consequently, $u(\bm{x})\geq 0$ for $\bm{x}\in \mathbb{R}^n$ when $\alpha\in (0,1)$. Therefore, $\{\bm{x}\in \mathbb{R}^n\mid u(\bm{x})\leq 0\}=\{\bm{x}\in \mathbb{R}^n\mid u(\bm{x})=0\}$ when $\alpha\in (0,1)$. An immediate result is $\{\bm{x}\in \mathbb{R}^n\mid u(\bm{x})\leq 0\}=\{\bm{x}\in \mathbb{R}^n\mid u(\bm{x})=0\}$.
 \end{proof}

Since \eqref{upper1} has indicator functions on the expression $u(\bm{x}_0)-u(\bm{f}(\bm{x}_0,\bm{d}))$, which is beyond the capability of the solvers we use. We would like to first obtain a constraint by removing indicators according to Lemma \ref{split}.
\begin{lemma}[\cite{ChenHWZ15}]
\label{split}
Suppose $\bm{F}(\bm{x})=1_{Y_1}\cdot\bm{F}_1(\bm{x})+\cdots+1_{Y_{k'}}\cdot \bm{F}_{k'}(\bm{x})$ and $\bm{G}(\bm{x})=1_{Z_1}\cdot \bm{G}_1(\bm{x})+\cdots+1_{Z_{l'}}\cdot \bm{G}_{l'}(\bm{x})$, where $\bm{x}\in \mathbb{R}^n$, $k',l'\in \mathbb{N}$, and $Y_i, Z_j\subseteq \mathbb{R}^n$, $i=1,\ldots,k'$, $j=1,\ldots,l'$. Also, $Y_1,\ldots, Y_{k'}$ and $Z_1,\ldots,Z_{l'}$ are respectively disjoint. Then, $\bm{F}\leq \bm{G}$ if and only if (pointwise)
\begin{equation}
\begin{split}
&\bigwedge_{i=1}^{k'}\bigwedge_{j=1}^{l'}\big[Y_i\wedge Z_j\Rightarrow \bm{F}_i\leq \bm{G}_j\big]\wedge\\
 &\quad\quad\quad\quad\bigwedge_{i=1}^{k'} \big[Y_i\wedge \big(\bigwedge_{j=1}^{l'} \neg Z_j\big)\Rightarrow \bm{F}_i\leq 0\big]\wedge\\
& \quad\quad\quad\quad\quad\quad\quad\quad\bigwedge_{j=1}^{l'}\big[\big(\bigwedge_{i=1}^{k'}\neg Y_i\big)\wedge Z_j\Rightarrow 0\leq \bm{G}_j\big].
\end{split}
\end{equation}
\end{lemma}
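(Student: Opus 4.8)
The plan is to prove both implications simultaneously by a pointwise case analysis driven by the disjointness hypotheses. First I would fix an arbitrary $\bm{x}\in\mathbb{R}^n$ and record how the indicator representations collapse at that point. Because $F_1,\ldots,F_{k'}$ are pairwise disjoint, $\bm{x}$ belongs to at most one $F_i$; hence $\bm{f}'(\bm{x})=\bm{f}'_i(\bm{x})$ if $\bm{x}\in F_i$ for the unique such $i$, and $\bm{f}'(\bm{x})=0$ if $\bm{x}\notin\bigcup_{i}F_i$. The same dichotomy holds for $\bm{g}'(\bm{x})$ through the disjoint sets $G_1,\ldots,G_{l'}$. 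Consequently every $\bm{x}$ lies in exactly one of four regions.

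These four regions are: (A) $\bm{x}\in F_i\cap G_j$ for some $i,j$; (B) $\bm{x}\in F_i$ for some $i$ but $\bm{x}\notin\bigcup_{j}G_j$; (C) $\bm{x}\notin\bigcup_{i}F_i$ but $\bm{x}\in G_j$ for some $j$; and (D) $\bm{x}$ lies in neither union. Substituting the collapsed values above, the inequality $\bm{f}'(\bm{x})\leq\bm{g}'(\bm{x})$ reduces on region (A) to $\bm{f}'_i(\bm{x})\leq\bm{g}'_j(\bm{x})$, on region (B) to $\bm{f}'_i(\bm{x})\leq 0$, on region (C) to $0\leq\bm{g}'_j(\bm{x})$, and on region (D) to the vacuous $0\leq 0$. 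These are precisely the conclusions of the three displayed conjunctions (together with the trivial fourth case), each guarded by exactly the membership premise that defines the region it governs.

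With this dictionary in hand both directions are immediate. For the forward direction I would assume $\bm{f}'\leq\bm{g}'$ pointwise and restrict it to each region, reading off the relevant implication; for the converse I would take an arbitrary $\bm{x}$, determine which region it occupies, and invoke the matching conjunct (region (D) needing nothing) to conclude $\bm{f}'(\bm{x})\leq\bm{g}'(\bm{x})$. There is no genuinely hard step: the entire content is the observation that disjointness forces each piecewise sum to have a single active branch on every region, so the only point requiring care is the neither-region (D), where both sides vanish and the inequality holds automatically, which explains why no implication carrying the joint negation $\bigl(\bigwedge_{i}\neg F_i\bigr)\wedge\bigl(\bigwedge_{j}\neg G_j\bigr)$ appears among the three conjunctions.
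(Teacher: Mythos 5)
Your proof is correct. Note that the paper itself offers no proof of this lemma---it is quoted verbatim from the cited reference \cite{ChenHWZ15}---so there is no in-paper argument to compare against; your pointwise four-region case analysis (collapsing each piecewise sum to its single active branch via disjointness, with the region outside both unions reducing to the trivial $0\leq 0$) is exactly the standard argument one would expect, and it handles both directions of the equivalence cleanly, including the correct observation of why no fourth guarded implication is needed.
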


Based on Lemma \ref{split}, the equivalent constraint without indicator functions of \eqref{upper1} is formulated below:
\begin{equation}
\label{upper2}
\begin{split}
&\bigwedge_{i=1}^{k}\big[u(\bm{x})-\alpha u( \bm{f}_i(\bm{x},\bm{d}))\geq 0, \forall \bm{d}\in D_i,~\forall \bm{x}\in X_i\big]\wedge \\
&\bigwedge_{j=1}^{n_0}\big[u(\bm{x})-h'_{j}(\bm{x})\geq 0, \forall \bm{x}\in \mathbb{R}^n\big].
\end{split}
\end{equation}

Assume that $\Omega(X)$ is the set of states, which are reachable from the set $X$ within one step computation, i.e., $$\Omega(X)=\{\bm{x}\mid \bm{x}=\bm{f}(\bm{x}_0,\bm{d}),\bm{x}_0\in X,\bm{d}\in D\}\cup X,$$ which can be obtained by solving semi-definite programs or linear programs \cite{lasserre2015,magron2017}. Herein, we assume that it was already given.
In addition, we define $\sum[\bm{y}]$ to be the set of sum of squares polynomials over variables $\bm{y}$, i.e.,
\[\sum[\bm{y}]=\{p\in \mathbb{R}[\bm{y}]\mid p=\sum_{i=1}^k q_i^2,q_i\in \mathbb{R}[\bm{y}],i=1,\ldots,k\}.\]
When $u(\bm{x})$ in \eqref{upper1} is constrained to polynomial type and is restricted in a ball $$B=\{\bm{x} \in \mathbb{R}^n\mid g(\bm{x})\geq 0\}$$ with $g(\bm{x})= R-\sum_{i=1}^n x_i^2$ such that $\Omega(X)\subseteq B$, \eqref{upper1} is relaxed as a semi-definite program \eqref{sos}.
\begin{algorithm}
\caption{Semi-definite Programming Implementation for Computing Robust Invariant Sets}
\begin{equation}
\label{sos}
\begin{split}
&\min_{u,s_{j}^{X_i},s_{j}^{D_i},s_{i},s'_{j}} \bm{c} \cdot \bm{w}\\
&u(\bm{x})-\alpha u(\bm{f}_i(\bm{x},\bm{d}))+\sum_{j=1}^{n_i}s^{X_i}_{j}h_{i,j}(\bm{x})+\sum_{j=1}^{m'_{i}}s^{D_i}_{j}p_{i,j}(\bm{d})-s_{i}g(\bm{x}) \in \sum[\bm{x},\bm{d}],\\
&(1+h_{j}^2)u(\bm{x})-h_{j}(\bm{x})-s'_{j}g(\bm{x})\in \sum[\bm{x}],\\
&i=1,\ldots,k; j=1,\ldots,n_0,
\end{split}
\end{equation}
where $\bm{c}\cdot \bm{w}=\int_{B}ud\bm{x}$, $\bm{w}$ is the constant vector computed by integrating the monomials in $u(\bm{x})$ over $B$, $\bm{c}$ is the vector composed of unknown coffecients in $u(\bm{x})$, $s^{X_i}_{j}, s_{j}^{D_i}, s_{i}\in \sum[\bm{x},\bm{d}]$ and $s'_{j}\in \sum[\bm{x}]$.  
\label{alg2}
\end{algorithm}

The problem of solving \eqref{sos} can be reformulated as a semi-definite programming problem, which falls within the convex programming framework. Note that the objective of \eqref{sos} facilitates the gain of  less conservative robust invariant sets. The reason is that if $\psi_1(\bm{x})\leq \psi_2(\bm{x})$ over $B$, then $\{\bm{x}\in B\mid \psi_2(\bm{x})\leq 0\}\subseteq \{\bm{x}\in B\mid \psi_1(\bm{x})\leq 0\}$ and $\int_{B} \psi_1(\bm{x})d\bm{x}\leq \int_{B} \psi_2(\bm{x})d\bm{x}$.

\begin{theorem}
\label{inner}
Let $u(\bm{x})\in \mathbb{R}[\bm{x}]$ be a solution to \eqref{sos}, then $\{\bm{x}\in B\mid u(\bm{x})\leq 0\}$ is a robust invariant set. Furthermore, if $\alpha\in (0,1)$, $u(\bm{x})\geq 0$ over $\bm{x}\in \mathbb{R}^n$ and consequently $\{\bm{x}\in B\mid u(\bm{x})\leq 0\}=\{\bm{x}\in B\mid u(\bm{x})=0\}.$
\end{theorem}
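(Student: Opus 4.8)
The plan is to show that any polynomial $u$ feasible for \eqref{sos} satisfies the inequality system \eqref{upper1} not globally but on the ball $B$, and then to replay the invariance argument behind Corollary~\ref{upper} while verifying that the relevant trajectories never leave $B$. Feasibility of \eqref{sos} means the two displayed polynomials are sums of squares, hence nonnegative on all of their arguments; I would then evaluate these nonnegative certificates at points where the SOS multipliers have a known sign in order to recover genuine pointwise inequalities, i.e. to undo the relaxation back to \eqref{upper2}.

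Concretely, fix $i$ and take $\bm{x}\in X_i\cap B$ and $\bm{d}\in D$. On $X_i$ we have $h_{i,l_1}(\bm{x})\le 0$, on $D$ we have $h_{k+1,l_2}(\bm{d})\le 0$, and on $B$ we have $h(\bm{x})\ge 0$; since the multipliers $s^{X_i}_{i,l_1},s^{D}_{i,l_2},s_{i,1}$ are SOS and therefore nonnegative, each added term in the first certificate of \eqref{sos} is $\le 0$ and the subtracted term $-s_{i,1}h(\bm{x})$ is $\le 0$. Nonnegativity of the whole SOS expression then forces $u(\bm{x})-\alpha u(\bm{f}_i(\bm{x},\bm{d}))\ge 0$. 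Because $\{X_i\}$ partitions $\mathbb{R}^n$ and $\bm{f}=\bm{f}_i$ on $X_i$, this is exactly the first line of \eqref{upper1} restricted to $B$. For the second certificate, evaluating on $B$ and dropping the nonnegative term $s'_{1,j}h(\bm{x})$ yields $(1+h_{0,j}^2)u(\bm{x})\ge h_{0,j}(\bm{x})$; dividing by $1+h_{0,j}^2>0$ and recalling $h'_{0,j}=h_{0,j}/(1+h_{0,j}^2)$ gives $u(\bm{x})\ge h'_{0,j}(\bm{x})$ for all $j$ and all $\bm{x}\in B$. Thus $u$ certifies \eqref{upper1}, but only over $B$.

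With these $B$-local inequalities I would run the induction underlying Corollary~\ref{upper}. Take $\bm{x}_0\in\{\bm{x}\in B\mid u(\bm{x})\le 0\}$. Since $u(\bm{x}_0)\ge\max_j h'_{0,j}(\bm{x}_0)$ and $u(\bm{x}_0)\le 0$, each $h'_{0,j}(\bm{x}_0)\le 0$, so $h_{0,j}(\bm{x}_0)\le 0$ and $\bm{x}_0\in X_0$. Assuming $\bm{x}_l\in X_0$ with $u(\bm{x}_l)\le 0$ and $\bm{x}_l\in B$, the hypothesis $\Omega(X_0)\subseteq B$ guarantees $\bm{x}_{l+1}=\bm{f}(\bm{x}_l,\bm{d}(l))\in B$, while the descent inequality at $\bm{x}_l\in X_0\subseteq B$ gives $\alpha u(\bm{x}_{l+1})\le u(\bm{x}_l)\le 0$, hence $u(\bm{x}_{l+1})\le 0$; the $B$-local bound on $h'_{0,j}$ then places $\bm{x}_{l+1}\in X_0$. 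By induction every $\bm{\phi}_{\bm{x}_0}^{\pi}(l)$ lies in $X_0$ for all $\pi$, so $\bm{x}_0\in\mathcal{R}_0$ and $\{\bm{x}\in B\mid u(\bm{x})\le 0\}$ is a robust invariant set.

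For the ``furthermore'' claim with $\alpha\in(0,1)$, I would argue directly on the invariant set rather than over all of $\mathbb{R}^n$. For $\bm{x}_0$ in it the whole trajectory stays in $X_0\subseteq B$, so iterating the descent inequality gives $u(\bm{x}_0)\ge\alpha^{l}u(\bm{\phi}_{\bm{x}_0}^{\pi}(l))$ for every $l$; since $u(\bm{x})>-1$ on $B$ and $\alpha^{l}\to 0$, letting $l\to\infty$ yields $u(\bm{x}_0)\ge 0$, which combined with $u(\bm{x}_0)\le 0$ forces $u(\bm{x}_0)=0$, establishing $\{\bm{x}\in B\mid u(\bm{x})\le 0\}=\{\bm{x}\in B\mid u(\bm{x})=0\}$. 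The main obstacle, and the only point requiring care beyond Corollary~\ref{upper}, is precisely this localization to $B$: the SOS certificate secures \eqref{upper1} only on $B$, so the argument must check at each step that the trajectory cannot escape $B$. This is exactly what $\Omega(X_0)\subseteq B$ buys, and it is why the candidate set is intersected with $B$ in the statement.
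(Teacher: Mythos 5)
Your proof is correct and follows essentially the same route as the paper's: you extract the pointwise inequalities on $B$ from the SOS certificates (the paper invokes the $\mathcal{S}$-procedure for exactly the sign argument you spell out), and then you propagate $u\le 0$ and membership in $X_0$ along trajectories using $\Omega(X_0)\subseteq B$ — the paper phrases this as a contradiction at the first exit time $l_0$, you phrase it as a direct induction, but the substance is identical. The only genuine divergence is in the ``furthermore'' part: the paper claims $u(\bm{x})\ge 0$ on $B$ by repeating the unboundedness argument of Corollary~\ref{upper} (iterate $u(\bm{f}^i(\bm{y},\bm{d}))\le \alpha^{-i}u(\bm{y})\to-\infty$ against the lower bound $u\ge\max_j h'_{0,j}>-1$, which requires the extra check that the offending trajectory stays in $X_0\subseteq B$ so the local certificates keep applying), whereas you iterate the descent inequality forward on the invariant set itself and let $\alpha^l\to 0$, obtaining $u=0$ there without ever establishing nonnegativity on all of $B$. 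Your variant is slightly weaker as an intermediate statement but fully sufficient for the claimed set equality, and it is arguably cleaner since it makes the localization to $B$ explicit at every step; note also that the theorem's literal assertion ``$u(\bm{x})\ge 0$ over $\bm{x}\in\mathbb{R}^n$'' is not derivable from certificates that are only valid on $B$ — the paper's own proof quietly downgrades it to ``over $\bm{x}\in B$'' — so your decision to argue only on the invariant set is not a gap but a sensible reading of what the constraints actually give.
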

\begin{proof}
Since $u(\bm{x})$ satisfies the constraint in \eqref{sos} and $\bm{f}(\bm{x},\bm{d})$ satisfies \eqref{path_1}, we obtain that $u(\bm{x})$ satisfies according to $\mathcal{S}-$ procedure presented in  \cite{boyd1994} and Lemma \ref{split} 
\begin{align}
&u(\bm{x})-\alpha u(\bm{f}(\bm{x},\bm{d}))\geq 0, \forall \bm{d}\in D, \forall \bm{x}\in  B \text{ and }\label{1} \\
&(1+h_{j}^2(\bm{x}))u(\bm{x})-h_{j}(\bm{x})\geq 0, \forall j\in \{1,\ldots,n_{0}\}.
\label{2}
\end{align}

Assume that there exist an initial state $\bm{y}\in \{\bm{x}\in B\mid u(\bm{x})\leq 0\}$ and a disturbance policy $\pi' \in \Pi_{\bm{y}}$ such that $\bm{\phi}_{\bm{y}}^{\pi'}(l)\in X$ does not hold for every $ l\in \mathbb{N}$. Since \eqref{2} holds, we have the conclusion that $$\{\bm{x}\in B\mid u(\bm{x})\leq 0\} \subset X$$ and thus $\bm{y}\in X$. Let $l_0\in \mathbb{N}$ be the first time making $\bm{\phi}_{\bm{y}}^{\pi'}(l)$ stay outside the set $X$, i.e. $$\bm{\phi}_{\bm{y}}^{\pi'}(l_0)\in B\setminus X \text{~and~}\bm{\phi}_{\bm{y}}^{\pi'}(l)\in X$$ for $l=0,\ldots,l_0-1$. That is, $u(\bm{\phi}_{\bm{y}}^{\pi'}(l_0))>0$. However, since $\Omega(X)\subset B$, \eqref{1} and \eqref{2}, we derive that $$u(\bm{\phi}_{\bm{y}}^{\pi'}(l_0))\leq 0.$$ This is a contradiction. Thus, every possible trajectory to DSPS initialized in $\{\bm{x}\in B\mid u(\bm{x})\leq 0\}$ does not leave the safe set $X$. Therefore, $$\{\bm{x}\in B\mid u(\bm{x})\leq 0\}$$ is a robust invariant set. 

When $\alpha \in (0,1)$,  using analogous arguments as in the proof of Corollary \ref{upper}, we obtain $u(\bm{x})\geq 0$ over $\{\bm{x}\in B\mid u(\bm{x})\leq 0\}$ and as a result we have $$\{\bm{x}\in B\mid u(\bm{x})\leq 0\}=\{\bm{x}\in B\mid u(\bm{x})=0\}.$$
\end{proof}

Theorem \ref{inner} indicates that a robust invariant set can be represented by $\{\bm{x}\in B\mid u(\bm{x})=0\}$ when $\alpha\in (0,1)$. However, such typical representation often results in extremely conservative robust invariant sets. Therefore, we generally assign $1$ to $\alpha$ when employing the semi-definite program \eqref{sos} for synthesizing robust invariant sets. This statement is similar to Remark 1 in \cite{WZF19}.

\section{Experiments}
\label{ex}
In this section we evaluate the performance of both the value iteration described in Alg. \ref{alg1}  and the semi-definite program \eqref{sos} described in Alg. \ref{alg2} on three illustrative examples. Moreover, we compare the methods in this paper with the ones in \cite{xue2018} based on these three examples.

The parameters that control the performance of our methods are presented in Table \ref{table}. All computations were performed on an i7-7500U 2.70GHz CPU with 32GB RAM running Windows 10. For numerical implementation of the semi-definite program \eqref{sos}, we formulate the sum of squares problem \eqref{sos} using the MATLAB package YALMIP\footnote{It can be downloaded from \url{https://yalmip.github.io/}.} \cite{lofberg2004} and use Mosek\footnote{The software Mosek can be obtained free from \url{https://www.mosek.com/} for academic use.} \cite{mosek2015mosek} as a semi-definite programming solver. For the value iteration, uniform grids are adopted for state and disturbance spaces. The state spaces for Example \ref{ex0}, \ref{ex2} and \ref{ex3} are respectively restricted to $[-1.1,1.1]\times [-1.1,1.1]$, $[-1,1]\times [-1,1]$ and $[-1,1]^7$.

\begin{table*}[t]
\begin{center}
\begin{tabular}{|c|c|c|c|c|c||c|c|c|c|c|c|c|c|c|}
  \hline
  \multirow{1}{*}{}&\multicolumn{5}{|c||}{\texttt{SDP}} & \multicolumn{5}{|c|}{\texttt{VI}} \\\hline
   Ex.&$d_u$&$d_{s}$&$d_{s'}$&$\alpha$&$T_{\mathtt{SDP}}$&$\alpha$&$\epsilon$& N& M&$T_{\mathtt{VI}}$\\\hline
   1&10&10&10&$1$&1.06 & {0.01}&$10^{-20}$&$10^4$&10& 81.50 \\\hline
                           
   \multirow{2}{*}{2}&8&10&10 &1&4.55 &\multirow{2}{*}{0.01}&\multirow{2}{*}{$10^{-20}$}&\multirow{2}{*}{$10^4$}&\multirow{2}{*}{10}&\multirow{2}{*}{307.32}\\
                          
                              &12&18&18&1&161.43& & && &\\\hline                 
         3& 4&4 &4 &1& 242.86& 0.01&$10^{-10}$ &$10^7$&10&--\\\hline            
   \end{tabular}
\end{center}
\caption{\textit{Parameters and performance of our implementations on the Examples \ref{ex0}$\sim$\ref{ex3}. $\alpha$: the parameter value in \eqref{vf}; $d_u, d_{s}, d_{s'}$: the degree of the polynomials $u, \{s_{i}, s_{j}^{X_i}, s_{j}^{D_i}\}$ and $s'_{j}$ in \eqref{sos}, respectively; $T_{\mathtt{SDP}}$: computation times (seconds) in solving \eqref{sos}; $\epsilon$: the stopping criterion in the value iteration in Alg. \ref{alg1}; $N,M$: numbers of elements in $\Lambda$ and $\Delta$ respectively in the value iteration in Alg. \ref{alg1}; $T_{\mathtt{VI}}$: computation times (seconds) in solving \eqref{eq} using the value iteration in Alg. \ref{alg1}.} }
\label{table}
\end{table*}

\begin{example}
\label{ex0}
Consider the discrete-generation predator-prey model from \cite{halanay2000stability} without switching, i.e., there is only one subsystem, where $$\bm{f}(x,y)=(ax-bxy;-cy+dxy)$$ with $a=0.5$, $b=1$, $c=0.5$ and $d=1$, $X=\{(x,y)\mid  x^2+y^2-1\leq 0\}$. 

We first solved \eqref{eq} for computing an approximation of the maximal robust invariant set based on the value iteration in Alg. \ref{alg1}. The obtained approximation is illustrated in Fig. \ref{fig-one-1}.  The level sets of the corresponding computed solution to \eqref{eq} are illustrated in Fig. \ref{fig-one-0}.
We also used the semi-definite programming based method in Alg. \ref{alg2} to synthesize robust invariant sets.  The result when $d_u=10$ is also visualized in Fig. \ref{fig-one-1}. 

\begin{figure}
\centering
\includegraphics[width=3.0in,height=2.00in]{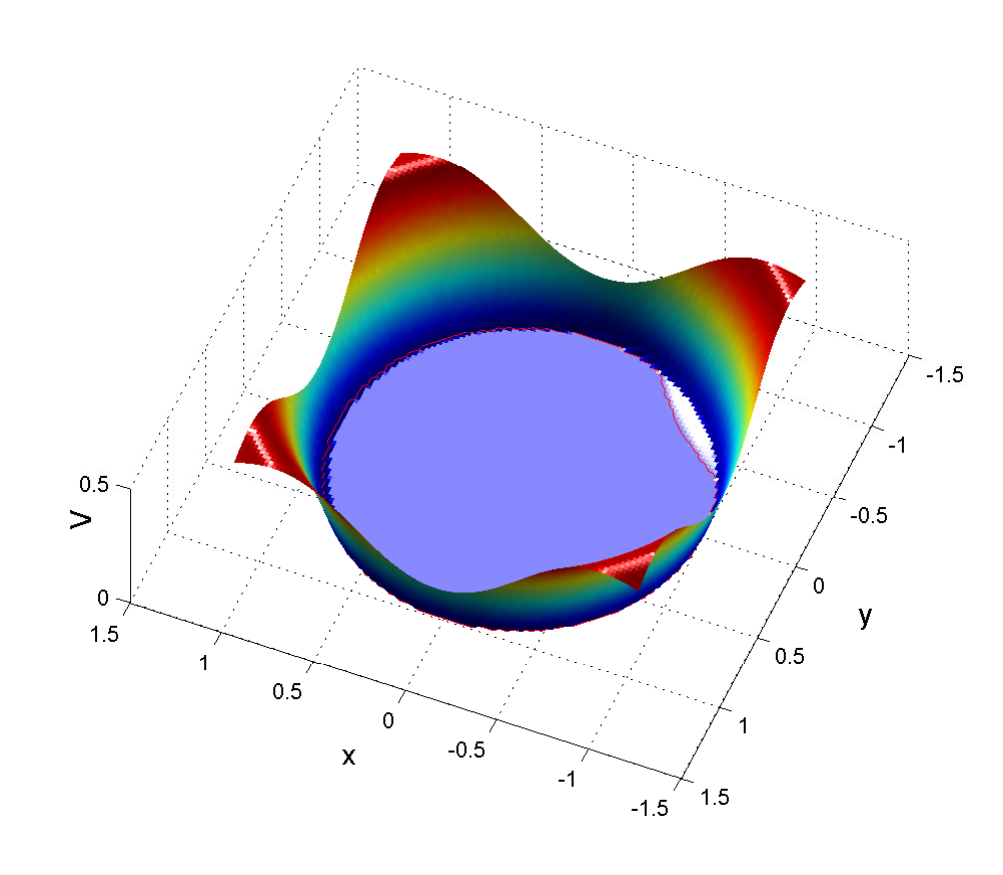}
\caption{Level sets of $V$ computed via the value iteration in Alg. \ref{alg1} for Example \ref{ex0}.}
\label{fig-one-0}
\end{figure}

\begin{figure}[!h]
\center
\includegraphics[width=3.0in,height=2.0in]{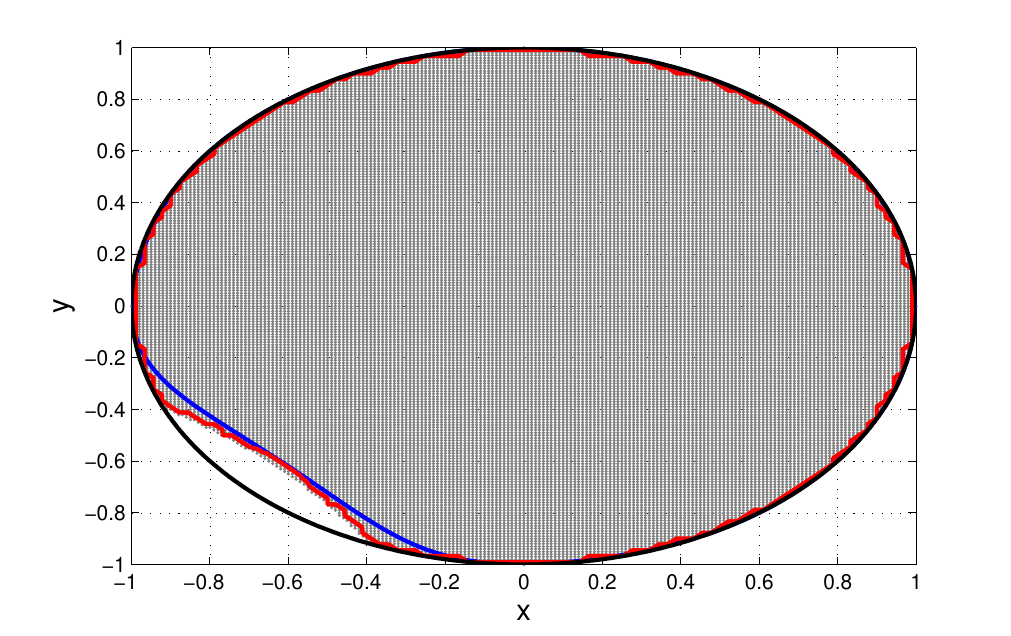}
\caption{Computed robust invariant sets for Example \ref{ex0}. (Blue curve denotes the boundary of the robust invariant set computed via Alg. \ref{alg2} with $d_u=10$.  Red curve denotes the boundary of the maximal robust invariant set computed via the value iteration in Alg. \ref{alg1}. Gray region denotes the maximal robust invariant set estimated via numerical simulation techniques. Black curve denotes the boundary of the safe set $X$.)}
\label{fig-one-1}
\end{figure}
\end{example}

\begin{example}
\label{ex2}
We consider a \emph{DSPS} with 
\begin{equation*}
\begin{split}
&\bm{f}_1(x,y)=(x;(0.5+d)x-0.1y),\\
&\bm{f}_2(x,y)=(y;0.2x-(0.1+d)y+y^2),
\end{split}
\end{equation*}
$X=\{(x,y)\mid  x^2+y^2-0.8\leq 0\}$, $X_1=\{(x,y)\mid 1-(x-1)^2-y^2\leq 0\}$, $X_2=\{(x,y)\mid -1+(x-1)^2+y^2<0\}$, $D_1=\{d\mid d^2-0.01\leq 0\}$ and $D_2=\{d\mid d^2-0.01\leq 0\}$. 

The robust invariant sets computed by solving \eqref{sos} when $d_u=8$ and $12$ are illustrated in Fig. \ref{fig-one-4}. Fig. \ref{fig-one-4} also presents the maximal robust invariant set, which is computed via  the value iteration in Alg. \ref{alg1}. The level sets of the corresponding computed solution to \eqref{eq} are shown in Fig. \ref{fig-one-3}. 

\begin{figure}[!h]
\centering
\includegraphics[width=3.0in,height=2.0in]{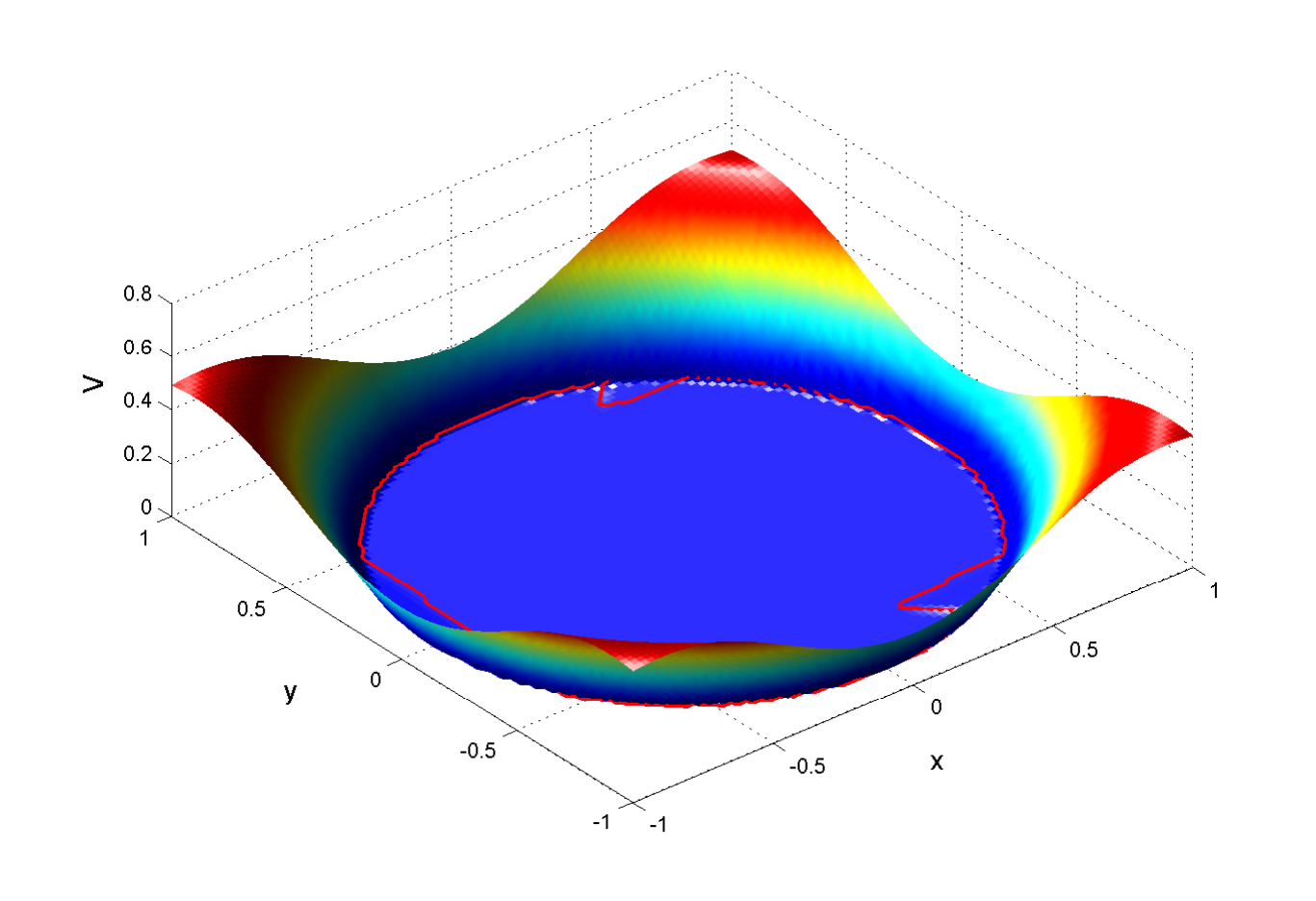}
\caption{Level sets of $V$ obtained via the value iteration in Alg. \ref{alg1} for Example \ref{ex2}.}
\label{fig-one-3}
\end{figure}

\begin{figure}[!h]
\center
\includegraphics[width=3.0in,height=1.5in]{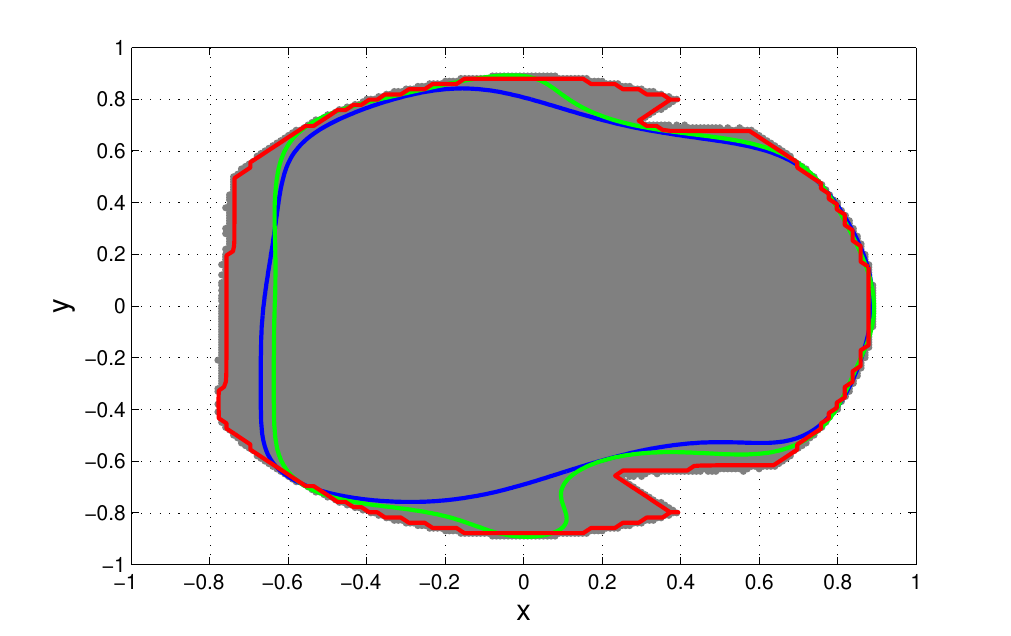}
\caption{Computed robust invariant sets for Example \ref{ex2}. \small{Blue and green curves denote the boundaries of robust invariant sets computed via Alg. \ref{alg2} with $d_u=8$ and $12$, respectively. Red curve denotes the boundary of the maximal robust invariant set computed via the value iteration in Alg. \ref{alg1}. Gray region denotes the maximal robust invariant set estimated via numerical simulation techniques.}}
\label{fig-one-4}
\end{figure}

\end{example}

The plots in Fig. \ref{fig-one-0} and Fig. \ref{fig-one-3} confirm the statement in Proposition \ref{sets} that the solution to \eqref{eq} with $\alpha\in (0,1)$ is non-negative. In addition, the semi-definite program \eqref{sos} with $\alpha=0.5$ is applied to Examples \ref{ex0} and \ref{ex2} as well. However, we did not obtain non-empty robust invariant sets for both examples based on the parameters in Table \ref{table}. This justifies the last statement in Subsection \ref{AA}.

The semi-definite programming based method in Alg. \ref{alg2} computes an inner-approximation of the maximal robust invariant set and consequently  brings conservativeness in estimating the maximal robust invariant set. This statement can be further justified by the visualized results in Fig. \ref{fig-one-1} and \ref{fig-one-4}. In contrast,  the value iteration in Alg. \ref{alg1} can solve the Bellman type equation \eqref{eq} with $\alpha\in (0,1)$ and produce an approximation of the maximal robust invariant set. However,  it requires partitioning the state and disturbance spaces, thereby exhibiting exponential growth in computational complexity with the number of state and disturbance variables and preventing its application to systems of dimension being larger than five. We illustrate this issue through an example with seven state variables.

\begin{example}
\label{ex3}
We consider a \emph{DSPS} with seven dimensional variables $\bm{x}=(x_1,x_2,x_3,x_4,x_5,x_6,x_7)$. In this example, 
\begin{equation*}
\begin{split}
    &\bm{f}_1(\bm{x})=((0.5+d)x_1;0.8x_2;0.6x_3+0.1x_6;x_4; 0.8x_5;0.1x_2+x_6;0.2x_2+0.6x_7),\\
    &\bm{f}_2(\bm{x})=(0.5x_1+0.1x_6;(0.5+d)x_2;x_3;0.1x_1+0.4x_4;0.2x_1+x_5;x_6;0.1x_1+x_7),
\end{split}
\end{equation*}
 $X=\{\bm{x}\mid  \sum_{i=1}^7x_i^2-1\leq 0\}$, $X_1=\{\bm{x}\mid x_1+x_2+x_3-x_4-x_5-x_6-x_7\geq 0\}$, $X_2=\{\bm{x}\mid x_1+x_2+x_3-x_4-x_5-x_6-x_7<0\}$, and $D_1=D_2=\{d \in \mathbb{R}\mid d^2-0.01\leq 0\}$. 

 Unlike for the low-dimensional Examples \ref{ex0} and \ref{ex2}, the value iteration for solving \eqref{eq} here runs out of memory and thus does not return a result. The semi-definite programming based method \eqref{sos} in Alg. \ref{alg2}, however, is still able to compute robust invariant sets, which are illustrated in Fig. \ref{fig-one-5}. In order to gauge the accuracy of computed robust invariant sets, we use numerical simulation techniques to obtain coarse approximations of the maximal robust invariant sets on planes $x_5-x_6$ with $x_1=x_2=x_3=x_4=x_7=0$ and $x_6-x_7$ with $x_1=x_2=x_3=x_4=x_5=0$ respectively. These approximations are also depicted 
in Fig. \ref{fig-one-5}.

\begin{figure}[!h]
\centering
\includegraphics[width=3.0in,height=1.5in]{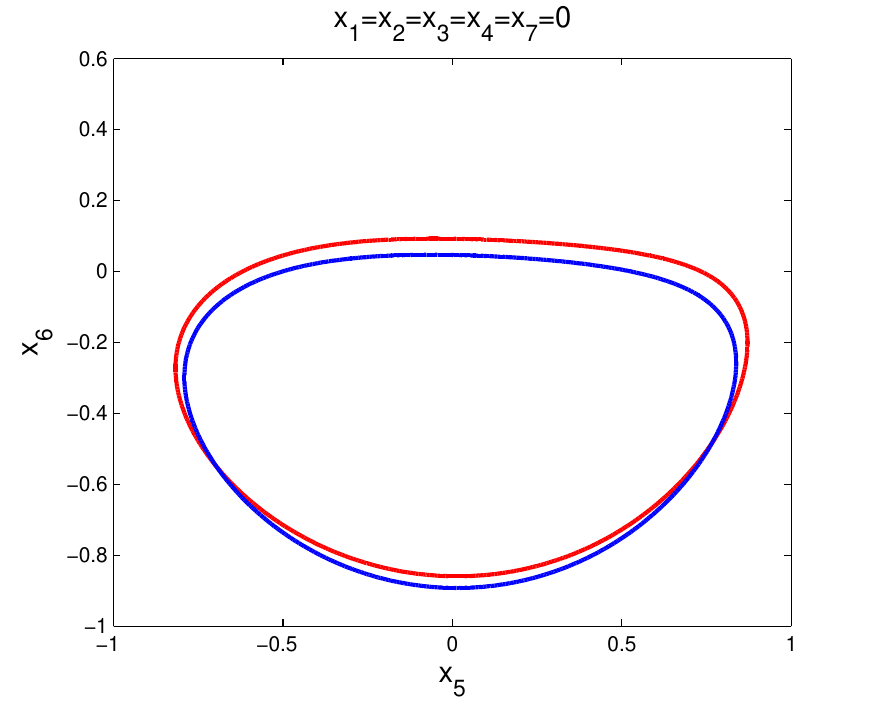}\\
\includegraphics[width=3.0in,height=1.5in]{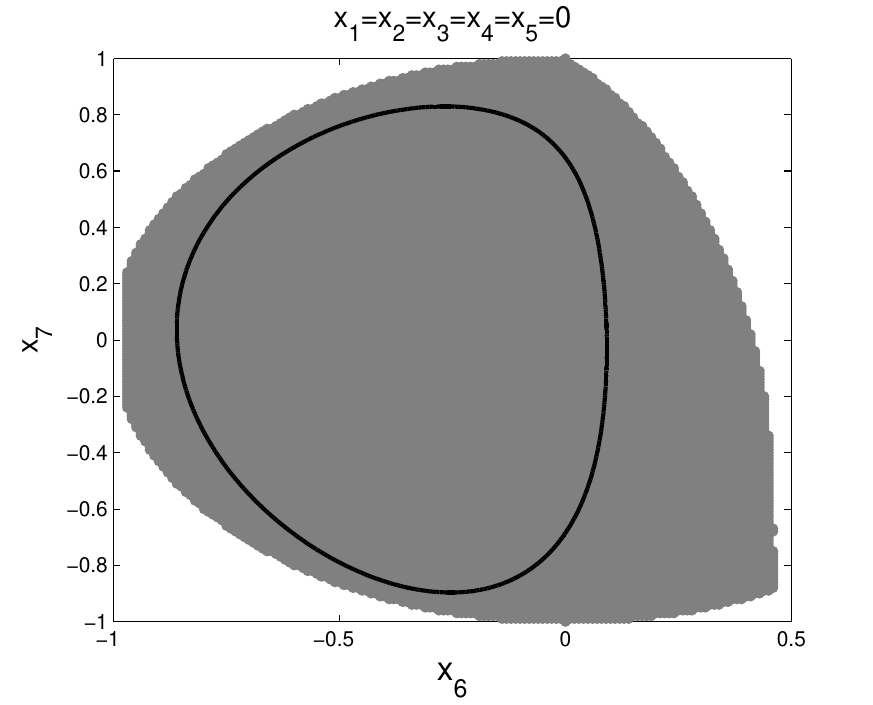}
\caption{Computed robust invariant sets for Example \ref{ex3}. Black curve denotes the boundary of the robust invariant set computed via Alg. \ref{alg2} with $d_u=4$. Gray region is the maximal robust invariant set estimated via numerical simulation techniques.}
\label{fig-one-5}
\end{figure}
\end{example}

Besides, we compare the methods in the present paper with the ones in \cite{xue2018} based on Examples \ref{ex0}$\sim$\ref{ex3}. We firstly compare the Bellman type equation based method in this paper with the one in \cite{xue2018}. Based on the same parameters listed in Table \ref{table}, the value iteration to solve the Bellman type equation in \cite{xue2018} does not terminate after one hour for both Examples \ref{ex0} and \ref{ex2}. This is because the value iteration for solving the Bellman type equation in \cite{xue2018} may not converge. Consequently, it can not be used to compute an approximation of the maximal robust invariant set. 
Similar to the one in this paper, the value iteration for solving the Bellman type equation in \cite{xue2018} runs out of memory and thus does not return an estimate for Example \ref{ex3}. Next, we compare the semi-definite programming based method \eqref{sos} with the one in \cite{xue2018} based on the same parameters in Table \ref{table}. The results obtained by these two methods for Examples \ref{ex0}, \ref{ex2} and \ref{ex3} are respectively visualized in Fig. \ref{fig-one-51}, \ref{fig-one-6} and \ref{fig-one-7}. We from Fig. \ref{fig-one-51} observe that almost the same robust invariant sets are produced for Example \ref{ex0} by these two methods\footnote{`almost the same' means that the computed maximal robust invariant sets showing on the graph are indistinguishable.}. Fig. \ref{fig-one-6} indicates that \eqref{sos} produces less conservative robust invariant sets for Example \ref{ex2} than that in \cite{xue2018} for both $d_u=8$ and $d_u=12$. Based on  Fig. \ref{fig-one-7}, it is hard to tell which results are more conservative and thus hard to compare the performance on Example \ref{ex3}. However, these comparisons indicate that the semi-definite program \eqref{sos} definitely achieves a reasonable improvement  over the one in \cite{xue2018} on estimating robust invariant sets for some cases, e.g. Example \ref{ex2}. In order to give an insight into the merits of one over the other for these two semi-definite programming formulations, the investigation of their structure differences  would be necessary. We will investigate this in our future work.

\begin{figure}[!h]
\center
\includegraphics[width=3.0in,height=1.3in]{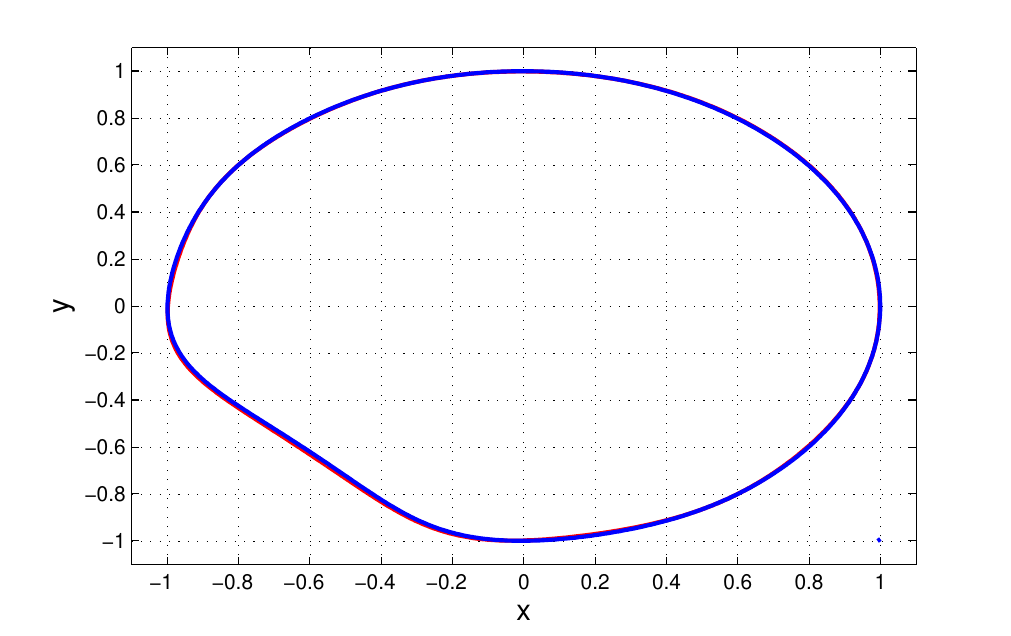}
\caption{Robust invariant sets for Example \ref{ex0} computed when $d_u=10$.  Blue and Red curves denote boundaries of robust invariant sets computed via \eqref{sos} and the semi-definite program in \cite{xue2018}, respectively.}
\label{fig-one-51}
\end{figure}

\begin{figure}[!h]
\center
\includegraphics[width=3.0in,height=1.3in]{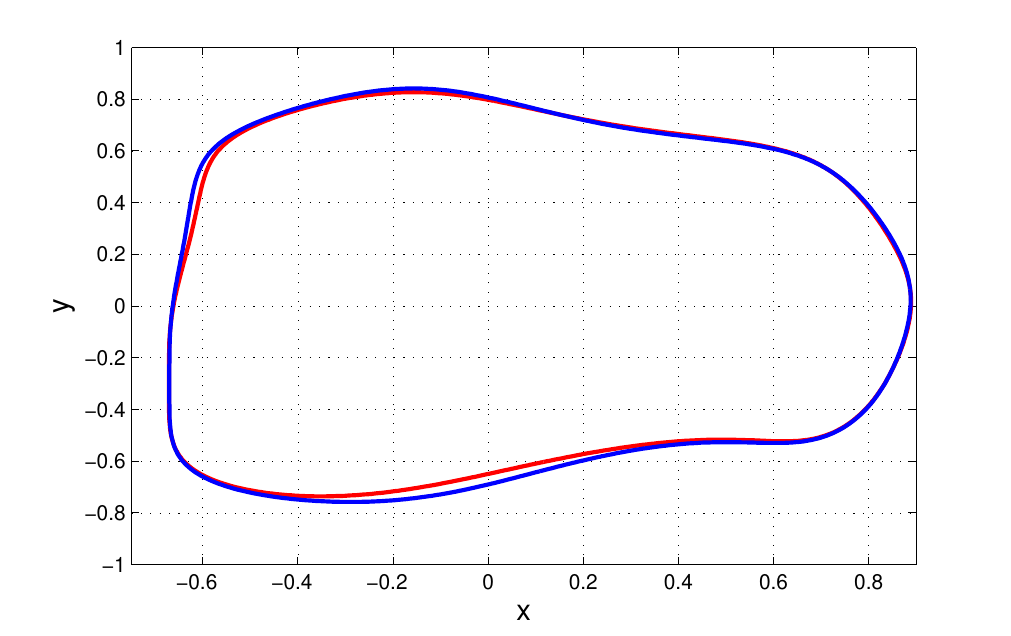}\\
\includegraphics[width=3.0in,height=1.3in]{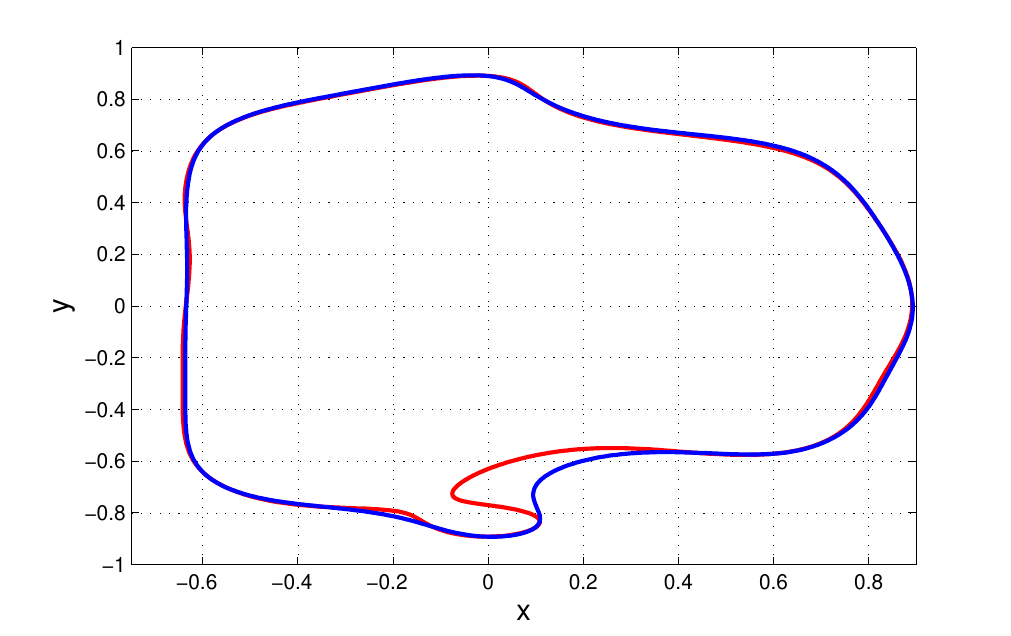}
\caption{Robust invariant sets for Example \ref{ex2} computed when $d_u=8, 12$ (from above to below). Blue and Red curves denote the boundaries of robust invariant sets computed via \eqref{sos} and the semi-definite program in \cite{xue2018}, respectively.}
\label{fig-one-6}
\end{figure}

\begin{figure}[!h]
\center
\includegraphics[width=3.0in,height=1.3in]{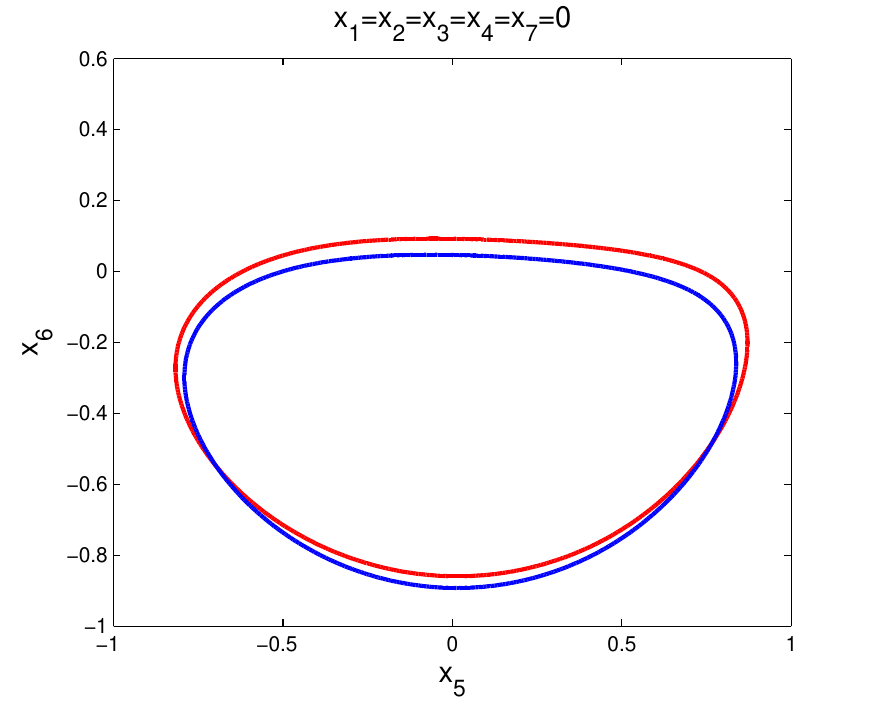}\\
\includegraphics[width=3.0in,height=1.3in]{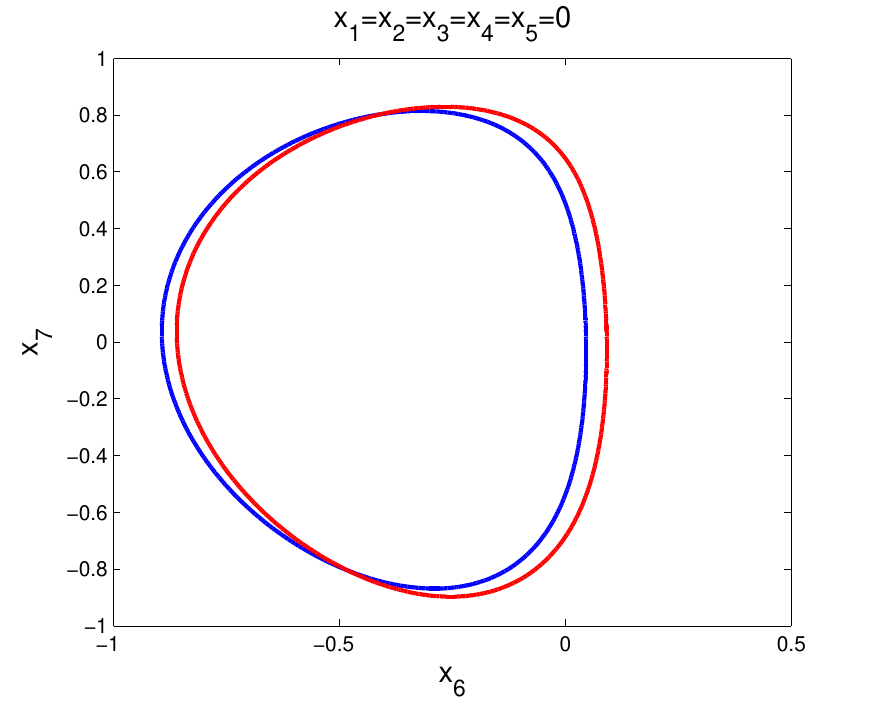}
\caption{Robust invariant sets for Example \ref{ex3} computed when $d_u=4$. Blue and Red curves denote the boundaries of robust invariant sets computed via \eqref{sos} and the semi-definite program in \cite{xue2018}, respectively.}
\label{fig-one-7}
\end{figure}

\section{Conclusion and Future Work}
\label{con}
In this paper we studied the robust invariant sets generation problem for discrete-time switched polynomial systems subject to disturbance inputs. We for the first characterize the maximal robust invariant set as the zero level set of the unique bounded solution to a Bellman type equation. Value iteration can be used to solve such equation with appropriate number of state and disturbance variables. Besides, based on the derived Bellman type equation a semi-definite program was also implemented to synthesize robust invariant sets. Three examples demonstrated the performance of our methods. 

In near future we would like to compare our method with the interval branch-and-bound approach in \cite{li2018invariance}. Also, we would extend our methods to the computation of robust invariant sets for  state-constrained hybrid systems subject to competing inputs (control and perturbation).

\bibliographystyle{abbrv}
\bibliography{reference}

\begin{thebibliography}{10}

\bibitem{aguilar2014}
C.~O. Aguilar and A.~J. Krener.
\newblock Numerical solutions to the {Bellman} equation of optimal control.
\newblock {\em Journal of Optimization Theory and Applications},
  160(2):527--552, 2014.

\bibitem{athanasopoulos2010}
N.~Athanasopoulos and G.~Bitsoris.
\newblock Invariant set computation for constrained uncertain discrete-time
  linear systems.
\newblock In {\em CDC'10}, pages 5227--5232, 2010.

\bibitem{aubin2011}
J.-P. Aubin, A.~M. Bayen, and P.~Saint-Pierre.
\newblock {\em Viability theory: new directions}.
\newblock Springer Science \& Business Media, 2011.

\bibitem{Bardi1997}
M.~Bardi and I.~Capuzzo-Dolcettae.
\newblock {\em Optimal control and viscosity solutions of
  {Hamilton}-{Jacobi}-{Bellman} equations}.
\newblock Springer Science \& Business Media, 1997.

\bibitem{bertsekas1972}
D.~Bertsekas.
\newblock Infinite time reachability of state-space regions by using feedback
  control.
\newblock {\em {IEEE} Trans. Autom. Control.}, 17(5):604--613, 1972.

\bibitem{blanchini2008set}
F.~Blanchini and S.~Miani.
\newblock {\em Set-theoretic methods in control}.
\newblock Springer, 2008.

\bibitem{boyd1994}
S.~Boyd, L.~El~Ghaoui, E.~Feron, and V.~Balakrishnan.
\newblock {\em Linear matrix inequalities in system and control theory}.
\newblock SIAM, 1994.

\bibitem{ChenHWZ15}
Y.-F. Chen, C.-D. Hong, B.-Y. Wang, and L.~Zhang.
\newblock Counterexample-guided polynomial loop invariant generation by
  lagrange interpolation.
\newblock In {\em CAV'15}, pages 658--674. Springer, 2015.

\bibitem{coutinho2013local}
D.~Coutinho and C.~E. Souza.
\newblock Local stability analysis and domain of attraction estimation for a
  class of uncertain nonlinear discrete-time systems.
\newblock {\em International Journal of Robust and Nonlinear Control},
  23(13):1456--1471, 2013.

\bibitem{falcone2016}
M.~Falcone and R.~Ferretti.
\newblock Numerical methods for {Hamilton-Jacobi} type equations.
\newblock In {\em Handbook of Numerical Analysis}, volume~17, pages 603--626.
  Elsevier, 2016.

\bibitem{giesl2014computation}
P.~Giesl and S.~Hafstein.
\newblock Computation of {Lyapunov} functions for nonlinear discrete time
  systems by linear programming.
\newblock {\em Journal of Difference Equations and Applications},
  20(4):610--640, 2014.

\bibitem{giesl2015review}
P.~Giesl and S.~Hafstein.
\newblock Review on computational methods for {Lyapunov} functions.
\newblock {\em Discrete and Continuous Dynamical Systems-Series B},
  20(8):2291--2331, 2015.

\bibitem{grieder2005}
P.~Grieder, S.~Rakovi{\'c}, M.~Morari, and D.~Mayne.
\newblock Invariant sets for switched discrete time systems subject to bounded
  disturbances.
\newblock {\em IFAC Proceedings Volumes}, 38(1):115--120, 2005.

\bibitem{halanay2000stability}
A.~Halanay and V.~Rasvan.
\newblock {\em Stability and stable oscillations in discrete time systems}.
\newblock CRC Press, 2000.

\bibitem{jonesgeneralization}
M.~Jones and M.~M. Peet.
\newblock A generalization of bellman’s equation with application to path
  planning, obstacle avoidance and invariant set estimation.
\newblock {\em https://arxiv.org/abs/2006.08175}, 2020.

\bibitem{khalil2002nonlinear}
H.~K. Khalil.
\newblock Nonlinear systems.
\newblock {\em Upper Saddle River}, 2002.

\bibitem{kolmanovsky1998}
I.~Kolmanovsky and E.~G. Gilbert.
\newblock Theory and computation of disturbance invariant sets for
  discrete-time linear systems.
\newblock {\em Mathematical problems in engineering}, 4(4):317--367, 1998.

\bibitem{lasserre2015}
J.~B. Lasserre.
\newblock Tractable approximations of sets defined with quantifiers.
\newblock {\em Mathematical Programming}, 151(2):507--527, 2015.

\bibitem{li2018invariance}
Y.~Li and J.~Liu.
\newblock Invariance control synthesis for switched nonlinear systems: An
  interval analysis approach.
\newblock {\em {IEEE} Trans. Autom. Control.}, 63(7):2206--2211, 2018.

\bibitem{li2018}
Y.~Li and J.~Liu.
\newblock {ROCS}: A robustly complete control synthesis tool for nonlinear
  dynamical systems.
\newblock In {\em HSCC'18}, pages 130--135. ACM, 2018.

\bibitem{lofberg2004}
J.~Lofberg.
\newblock Yalmip: A toolbox for modeling and optimization in {MATLAB}.
\newblock In {\em CACSD'04}, pages 284--289. IEEE, 2004.

\bibitem{luk2015}
C.~K. Luk and G.~Chesi.
\newblock On the estimation of the domain of attraction for discrete-time
  switched and hybrid nonlinear systems.
\newblock {\em International Journal of Systems Science}, 46(15):2781--2787,
  2015.

\bibitem{lunze2009handbook}
J.~Lunze and F.~Lamnabhi-Lagarrigue.
\newblock {\em Handbook of hybrid systems control: theory, tools,
  applications}.
\newblock Cambridge University Press, 2009.

\bibitem{magron2017}
V.~Magron, P.-L. Garoche, D.~Henrion, and X.~Thirioux.
\newblock Semidefinite approximations of reachable sets for discrete-time
  polynomial systems.
\newblock {\em arXiv preprint arXiv:1703.05085}, 2017.

\bibitem{mahmoud2012discrete}
M.~S. Mahmoud and M.~G. Singh.
\newblock {\em Discrete systems: analysis, control and optimization}.
\newblock Springer Science \& Business Media, 2012.

\bibitem{maidens2013}
J.~N. Maidens, S.~Kaynama, I.~M. Mitchell, M.~M. Oishi, and G.~A. Dumont.
\newblock Lagrangian methods for approximating the viability kernel in
  high-dimensional systems.
\newblock {\em Automatica}, 49(7):2017--2029, 2013.

\bibitem{margellos2011}
K.~Margellos and J.~Lygeros.
\newblock {Hamilton}--{Jacobi} formulation for reach--avoid differential games.
\newblock {\em {IEEE} Trans. Autom. Control.}, 56(8):1849--1861, 2011.

\bibitem{mitchell2007}
I.~M. Mitchell.
\newblock A toolbox of level set methods.
\newblock {\em UBC Department of Computer Science Technical Report TR-2007-11},
  2007.

\bibitem{mitchell2005}
I.~M. Mitchell, A.~M. Bayen, and C.~J. Tomlin.
\newblock A time-dependent {Hamilton}-{Jacobi} formulation of reachable sets
  for continuous dynamic games.
\newblock {\em {IEEE} Trans. Autom. Control.}, 50(7):947--957, 2005.

\bibitem{mosek2015mosek}
A.~Mosek.
\newblock The mosek optimization toolbox for {MATLAB} manual.
\newblock {\em Version 7.1 (Revision 28)}, page~17, 2015.

\bibitem{rabbath2013discrete}
C.~A. Rabbath and N.~L{\'e}chevin.
\newblock {\em Discrete-time control system design with applications}.
\newblock Springer Science \& Business Media, 2013.

\bibitem{rakovic2004}
S.~Rakovic, P.~Grieder, M.~Kvasnica, D.~Mayne, and M.~Morari.
\newblock Computation of invariant sets for piecewise affine discrete time
  systems subject to bounded disturbances.
\newblock In {\em CDC'04}, volume~2, pages 1418--1423. IEEE, 2004.

\bibitem{rakovic2008}
S.~V. Rakovi{\'c} and M.~Fiacchini.
\newblock Approximate reachability analysis for linear discrete time systems
  using homothety and invariance.
\newblock {\em IFAC Proceedings Volumes}, 41(2):15327--15332, 2008.

\bibitem{rakovic2005}
S.~V. Rakovic, E.~C. Kerrigan, K.~I. Kouramas, and D.~Q. Mayne.
\newblock Invariant approximations of the minimal robust positively invariant
  set.
\newblock {\em {IEEE} Trans. Autom. Control.}, 50(3):406--410, 2005.

\bibitem{rungger2017}
M.~Rungger and P.~Tabuada.
\newblock Computing robust controlled invariant sets of linear systems.
\newblock {\em {IEEE} Trans. Autom. Control.}, 62(7):3665--3670, 2017.

\bibitem{she2013computing}
Z.~She and B.~Xue.
\newblock Computing an invariance kernel with target by computing
  {Lyapunov}-like functions.
\newblock {\em IET Control Theory \& Applications}, 7(15):1932--1940, 2013.

\bibitem{tahir2012}
F.~Tahir and I.~M. Jaimoukha.
\newblock Robust positively invariant sets for linear systems subject to
  model-uncertainty and disturbances.
\newblock {\em IFAC Proceedings Volumes}, 45(17):213--217, 2012.

\bibitem{trodden2016}
P.~Trodden.
\newblock A one-step approach to computing a polytopic robust positively
  invariant set.
\newblock {\em {IEEE} Trans. Autom. Control.}, 61(12):4100--4105, 2016.

\bibitem{WZF19}
B.~Xue, Q.~Wang, N.~Zhan, and M.~Fr{\"{a}}nzle.
\newblock Robust invariant sets generation for state-constrained perturbed
  polynomial systems.
\newblock In {\em HSCC'19}, pages 128--137, 2019.

\bibitem{xue2020cha}
B.~Xue, N.~Zhan, and Y.~Li.
\newblock A characterization of robust regions of attraction for discrete-time
  systems based on bellman equations.
\newblock In {\em IFAC'20}, pages 6468--6475. IFAC, 2020.

\bibitem{xue2018}
B.~Xue, N.~Zhan, Y.~Li, and Q.~Wang.
\newblock Robust non-termination analysis of numerical software.
\newblock In {\em SETTA'18}, pages 69--88. Springer, 2018.

\end{thebibliography}
\end{document}